\algnewcommand\Break{\textbf{break}}
\algnewcommand\Continue{\textbf{continue}}
\algnewcommand\TAnd{~\textbf{and}~}
\algnewcommand\True{\mathbf{true}}
\algnewcommand\False{\mathbf{false}}
\algnewcommand\Null{\mathbf{null}}
\newcommand{\R}{\mathbb{R}}
\newcommand{\ModSSBPshort}[0]{ASBPIC\xspace}
\newcommand{\ModSSBPlong}[0]{Arbitrary-Source Bottleneck Path with Initial Capacity\xspace}
\newcommand{\initcapa}[0]{initial capacity\xspace}
\newcommand{\initcapam}[0]{initial capacities\xspace}
\newcommand{\initcapamBig}[0]{Initial Capacities\xspace}
\newcommand{\statuslabeled}[0]{active\xspace}
\theoremstyle{plain}
\newtheorem{theorem}{Theorem}
\newtheorem{lemma}[theorem]{Lemma}
\newtheorem{definition}[theorem]{Definition}
\newtheorem{remark}{Remark}[section]
\title{Single-Source Bottleneck Path Algorithm Faster than Sorting \\ for Sparse Graphs}
\author[1]{Ran Duan \thanks{duanran@mail.tsinghua.edu.cn. Supported by a China Youth 1000-Talent grant.}}
\author[1]{Kaifeng Lyu \thanks{lkf15@mails.tsinghua.edu.cn.}}
\author[1]{Hongxun Wu \thanks{whx991201@gmail.com.}}
\author[1]{Yuanhang Xie \thanks{xieyh15@mails.tsinghua.edu.cn.}}
\affil[1]{Institute for Interdisciplinary Information Sciences, Tsinghua University}
\date{}
\begin{document}

\maketitle

\begin{abstract} \small\baselineskip=9pt
In a directed graph $G=(V,E)$ with a capacity on every edge, a \emph{bottleneck path} (or \emph{widest path}) between two vertices is a path maximizing the minimum capacity of edges in the path. For the single-source all-destination version of this problem in directed graphs, the previous best algorithm runs in $O(m+n\log n)$ ($m=|E|$ and $n=|V|$) time, by Dijkstra search with Fibonacci heap~[Fredman and Tarjan 1987]. We improve this time bound to $O(m\sqrt{\log n})$,
thus it is the first algorithm which breaks the time bound of classic Fibonacci heap when $m=o(n\sqrt{\log n})$. It is a Las-Vegas randomized approach. By contrast, the s-t bottleneck path has an algorithm with running time $O(m\beta(m,n))$~[Chechik et al. 2016], where $\beta(m,n)=\min\{k\geq 1: \log^{(k)}n\leq\frac{m}{n}\}$.
\end{abstract}

\section{Introduction}

The \emph{bottleneck path} problem is a graph optimization problem finding a path between two vertices with the maximum flow, in which the flow of a path is defined as the minimum capacity of edges on that path. The bottleneck problem can be seen as a mathematical formulation of many network routing problems, e.g. finding the route with the maximum transmission speed between two nodes in a network, and it has many other applications such as digital compositing~\cite{Fernandez1998}. It is also the important building block of other algorithms, such as the improved Ford-Fulkerson algorithm~\cite{edmonds1972theoretical,fordf1956}, and k-splittable flow algorithm~\cite{baier2002k}. The minimax path problem which finds the path that minimizes the maximum weight on it is symmetric to the bottleneck path problem, thus has the same time complexity.

\subsection{Our Results}

In a directed graph $G=(V,E)$ ($n=|V|,m=|E|$), we consider the single-source bottleneck path (SSBP) problem, which finds the bottleneck paths from a given source node $s$ to all other vertices.
In the comparison-based model, the previous best time bound for SSBP is the traditional Dijkstra's algorithm~\cite{dijkstra1959note} with Fibonacci heap~\cite{fredman1987fibonacci},
which runs in $O(m + n\log n)$ time.
Some progress has been made for slight variants of the SSBP problem: When the graph is undirected, SSBP is reducible to minimum spanning tree~\cite{Hu1961}, thus can be solved in randomized linear time \cite{karger1995randomized}; for the single-source single-destination bottleneck path ($s$-$t$ BP) problem in directed graphs, Gabow and Tarjan~\cite{gabow1988algorithms} showed that it can be solved in $O(m \log^* n)$ time, and this bound was subsequently improved by Chechik~et~al~\cite{chechik2016bottleneck} to $O(m\beta(m,n))$.
However, until now no algorithm is known to be better than Dijkstra's algorithm for SSBP in directed graphs.
And as noted in \cite{fredman1987fibonacci}, Dijkstra's algorithm can be used to sort $n$ numbers, so a ``sorting barrier'', $O(m + n \log n)$, prevents us from finding a more efficient implementation of Dijkstra's algorithm.

In this paper, we present a breakthrough algorithm for SSBP that overcomes the sorting barrier. Our main result is shown in the following theorem:
\begin{theorem}
Let $G = (V, E)$ be a directed graph with edge weights $w: E \to \R$. In comparison-based model, SSBP can be solved in expected $O\left(m \sqrt{\log n}\right)$ time.
\end{theorem}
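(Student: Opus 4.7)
The plan is to break Dijkstra's sorting barrier by replacing the exact priority queue with a coarser data structure that is tuned to the bottleneck semantics, namely that only \emph{min} operations propagate along paths. I would first generalize SSBP to the auxiliary problem \ModSSBPlong (\ModSSBPshort): instead of one source, a set of \statuslabeled vertices each carry an \initcapa, and for every $v$ we must return $\max_s \min(\mathrm{cap}(s),\,\mathrm{bot}(s\!\to\!v))$ over \statuslabeled $s$. Ordinary SSBP is the case of a single \statuslabeled source with infinite \initcapa. The reason to prefer this formulation is compositionality: if we pause the algorithm after handling all edges of weight above some threshold, whatever frontier we have produced can be re-encoded as a new \ModSSBPshort instance on the remaining graph, with the frontier vertices marked \statuslabeled and their already-computed bottleneck values used as \initcapam.

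The algorithmic core I would develop is a two-level weight-bucketed Dijkstra driven by randomized selection. Set $B=\Theta(\sqrt{\log n})$. Using randomized pivoting (Las Vegas quickselect), partition the edges incident to the current frontier into $B$ weight ranges in expected linear time, avoiding any full sort. Run a Dijkstra-like sweep across these $B$ ranges from heaviest to lightest; each time the sweep commits to a range, recurse into that range via \ModSSBPshort, with the already-finalized vertices as the \statuslabeled seeds. Correctness follows from the monotone invariant that a vertex's bottleneck value, once certified by relaxations involving only edges of weight at least $\tau$, can be improved only along paths whose every edge has weight exceeding $\tau$, and all such edges are exhausted before the sweep descends past $\tau$.

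The hard part, and the reason for the specific exponent $\sqrt{\log n}$, will be the work analysis rather than correctness. Naively each edge could be touched $\Theta(\log n)$ times across the recursion, losing the whole improvement. The cost has to be controlled by balancing two terms: the depth of recursion, which I would set to $\sqrt{\log n}$ by picking bucket count $B=2^{\sqrt{\log n}}$, against the per-level work, which must stay $O(m)$. Achieving the latter requires the randomized pivots to produce geometrically shrinking subproblems in expectation, so that the edges appearing in subproblems across one level sum to $O(m)$; this is a Chernoff/expected-size argument over the pivot selection. Additionally, heap operations must be amortized against edge relaxations via a specialized bucket queue whose per-operation cost scales with the current recursion depth, so that an edge relaxed at level $i$ pays $O(1)$ and the total charge per edge telescopes to $O(\sqrt{\log n})$.

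A secondary obstacle I would have to address carefully is the bookkeeping for the \ModSSBPshort generalization itself: when a recursive call finalizes a vertex $v$ with value $x$, and $v$ is already \statuslabeled in an outer instance with \initcapa $y$, the outer instance must smoothly absorb $\min(x,y)$ as the new certified value without triggering a cascade of re-relaxations. Specifying a clean interface between levels, and proving that the \statuslabeled-with-\initcapa invariant is preserved across the ``pause and restart'' transitions between outer sweeps and inner recursions, is where I expect the bulk of the technical work to lie.
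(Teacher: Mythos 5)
Your high-level skeleton matches the paper's: reduce to \ModSSBPshort, recursively split vertices into level sets by randomly sampled weight thresholds, and recurse on each level, with the (eventual, after an earlier inconsistent choice of $B=\Theta(\sqrt{\log n})$) parameter $B = 2^{\sqrt{\log n}}$ giving recursion depth $O(\sqrt{\log n})$. The reduction that lets you pause and restart---frontier vertices carrying already-computed values as \initcapam---is indeed Lemma~\ref{lam:red} of the paper and you have its statement right.

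However, you have not identified where the real difficulty lies, and the plan you sketch for the work analysis does not close the gap. With $B = 2^{\sqrt{\log n}}$ buckets, just determining which of the $B+1$ intervals an edge weight $w(u,v)$ or an \initcapa $h(v)$ falls into costs $\Omega(\log B) = \Omega(\sqrt{\log n})$ per item in a comparison model. Done naively on every edge and vertex at every level of the recursion, that is $O(m \cdot \sqrt{\log n} \cdot \sqrt{\log n}) = O(m\log n)$, i.e.\ no improvement over Fibonacci-heap Dijkstra. Your analysis asserts that per-level work ``must stay $O(m)$'' and invokes a ``specialized bucket queue whose per-operation cost scales with the current recursion depth,'' but the bottleneck is not queue operations (buckets already give $O(1)$ amortized); the bottleneck is the index (rank) evaluations, and you offer no mechanism for avoiding them. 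The paper's central technical contribution (Lemma~\ref{lam:spl}) is precisely a split routine that runs in $O(m + r\log k)$ where $r$ counts edges that vanish from all subsequent recursive calls: (i) for edge weights, it only evaluates $I(w(u,v))$ when $w(u,v) < \lambda_{d'(u)}$, and proves (Lemma~\ref{lam:qe}) that any such edge is cross-level or below-level and hence disappears; (ii) for \initcapam, it partitions a spanning tree into $O(n/\log k)$ groups of size $O(\log k)$, keeps only each group's maximum explicitly, and charges extra evaluations to cross-level tree edges (Lemma~\ref{lam:diffdp}). Neither idea appears in your proposal, and without them the $\log B$ factor per edge per level survives. Your closing paragraph puts the ``bulk of the technical work'' in the \ModSSBPshort interface bookkeeping, but that part of the paper is a short, clean proof; the genuinely hard content is the split algorithm you have left unaddressed.
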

Our algorithm is inspired by previous works on the $s$-$t$ BP problem: the $O(m \log^* n)$-time algorithm by Gabow and Tarjan~\cite{gabow1988algorithms} and the $O(m \beta(m, n))$-time algorithm by Chechik~et~al~\cite{chechik2016bottleneck}.
See Section~\ref{sec:intu} for our intuitions.

\subsection{Related Works}
A ``sorting barrier'' seemed to exist for the the Minimum Spanning Tree problem (MST) for many years \cite{Bor26,Jar30,Kru56}, but it was eventually broken by \cite{Yao75,CT76}.
Fredman and Tarjan~\cite{fredman1987fibonacci} gave an $O(m\beta(m,n))$-time algorithm by introducing Fibonacci heap. The current best time bounds for MST include randomized linear time algorithm by Karger et al~\cite{karger1995randomized}, Chazelle's $O(m\alpha(m,n))$-time deterministic algorithm~\cite{Chaz00a} and Pettie and Ramachandran's optimal approach~\cite{PR02c}.

The single-source single-destination version of the bottleneck path (s-t BP) problem is proved to be equivalent to the Bottleneck Spanning Tree (BST) problem (see~\cite{chechik2016bottleneck}). In the bottleneck spanning tree problem, we want to find a spanning tree rooted at source node $s$ minimizing the maximum edge weight in it. For undirected graph, the s-t BP can be reduced to the MST problem. For directed graph, Dijkstra's algorithm \cite{dijkstra1959note} gave an $O(n\log n+m)$-time solution using Fibonacci heap \cite{fredman1987fibonacci}. Then Gabow and Tarjan \cite{gabow1988algorithms} gave an $O(m\log^*n)$-time algorithm based on recursively splitting edges into levels. Recently, Chechik~et~al.~\cite{chechik2016bottleneck} improved the time complexity of BST and BP to randomized $O(m\beta(m,n))$ time, where $\beta(m,n)=\min\{k\geq 1: \log^{(k)}n\leq\frac{m}{n}\}$. All these algorithms are under comparison-based model. For word RAM model, an $O(m)$-time algorithm has been found by Chechik et al.~\cite{chechik2016bottleneck}.

For the all-pairs version of the bottleneck path (APBP) problem, we can sort all the edges and use Dijkstra search to obtain an $O(mn)$ time bound. For dense graphs, it has been shown that APBP can be solved in truly subcubic time. Shapira et al.~\cite{shapira2007all} gave an $O(n^{2.575})$-time APBP algorithm on vertex-weighted graphs. Then Vassilevska et al.~\cite{vassilevska2007all} showed that APBP for edge-weighted graphs can be solved in $O(n^{2+\omega/3})=O(n^{2.791})$ time based on computing $(\max, \min)$-product of real matrices, which was then improved by Duan and Pettie~\cite{duan2009fast} to $O(n^{(3+\omega)/2})=O(n^{2.686})$. Here $\omega<2.373$ is the exponent of time bound for fast matrix multiplication~\cite{Coppersmith1990, Williams:2012}.

\section{Preliminaries}

For a directed graph $G$, we denote $w(u, v)$ to be the edge weight of $(u, v) \in E$.
Without additional explanation, we use the symbol $n$ to denote the number of nodes and $m$ to denote the number of edges in $G$.

\subsection{Bottleneck Path Problems} \label{sec:def-bp}

The \textit{capacity} of a path is defined to be the minimum weight among traversed edges, i.e., if a path traverses $e_1, \dots, e_l \in E$, then the capacity of the path is $\min_{i = 1}^{l} w(e_i)$.
For any $u, v \in V$, a path from $u$ to $v$ with maximum capacity is called a \textit{bottleneck path} \textit{from} $u$ \textit{to} $v$, and we denote this maximum capacity by $b(u, v)$.

\begin{definition}
The \textit{Single-Source Bottleneck Path} (SSBP) problem is:
Given a directed graph $G = (V, E)$ with weight function $w : E \to \R$ and a source $s \in V$, output $b(s, t)$ for every $t \in V$, which is the maximum path capacity among all the paths from $s$ to $t$.
\end{definition}
We use the triple $(G, w, s)$ to denote a SSBP instance with graph $G$, weight function $w(\cdot)$ and source node $s$.

It is more convenient to present our algorithm on a slight variant of the SSBP problem. We shall call it \textit{\ModSSBPlong} (\ModSSBPshort) problem. 
We assume that the edge weight $w(e)$ of an edge $e \in E$ is either a real number or infinitely large ($+\infty$).
We say an edge $e$ is \textit{unrestricted} if $w(e) = +\infty$; otherwise we say the edge $e$ is \textit{restricted}. In the \ModSSBPshort problem, an \textit{\initcapa} $h(v)$ is given for every node $v \in V$, and the \textit{capacity} of a path is redefined to be 
the minimum between the \initcapa of the starting node and the minimum edge weights in the path, i.e., if the path starts with the node $v \in V$ and traverses $e_1, \dots, e_l$, then its capacity is $\min\left(\{h(v)\} \cup \{w(e_i)\}_{i = 1}^{l}\right)$. 
For any $v \in V$, a path ended with $v$ with maximum capacity is called a \textit{bottleneck path ended with} $v$, and we denote this maximum capacity as $d(v)$.

\begin{definition}
The \textit{\ModSSBPlong} (\ModSSBPshort) problem is:
Given a directed graph $G = (V, E)$ with weight function $w : E \to \R \cup \{+\infty\}$ and \initcapa function $h: V \to \R \cup \{\pm \infty\}$, output $d(v)$ for every $v \in V$, which is the maximum path capacity among all the paths ended with $v$. 
\end{definition}
We use the triple $(G, w, h)$ to denote an \ModSSBPshort instance with graph $G$, weight function $w(\cdot)$, and inital capacity function $h(\cdot)$.

Note that \ModSSBPshort and SSBP are equivalent under linear-time reductions. Given an \ModSSBPshort instance, we construct a new graph $G'$ from $G$ by adding a new node $s$ which has outgoing edges with weight $h(v)$ to all the nodes $v \in V$ having $h(v) > -\infty$, then it suffices to find bottleneck paths from $s$ to all other nodes in $G'$.
On the other hand, a SSBP instance $(G, w, s)$ can be easily reduced to the \ModSSBPshort instance $(G, w, h)$, where $h(s) = \max_{e\in E}\{w(e)\}$ and $h(v) = -\infty$ for all $v \neq s$.

\subsection{Dijkstra's Algorithm for SSBP and \ModSSBPshort} \label{sec:dijk}

SSBP can be easily solved using a variant of Dijkstra's algorithm~\cite{dijkstra1959note}. In this algorithm, each node is in one of the three status: \textit{unsearched}, \textit{\statuslabeled}, or \textit{scanned}. We associate each node $v \in V$ with a \textit{label} $d'(v)$, which is the maximum path capacity among all the paths from $s$ to $v$ that only traverse scanned nodes or $v$.

Initially, all the nodes are unsearched except $s$ is \statuslabeled, and we set $d'(s) = +\infty$ and $d'(v) = -\infty$ for all $v \neq s$. We repeat the following step, which we call the \textit{Dijkstra step}, until none of nodes is \statuslabeled:
\begin{itemize}
    \item Select an \statuslabeled node $u$ with maximum label and mark $u$ as scanned. For every outgoing edge $(u, v)$ of $u$, update the label of $v$ by
\begin{equation} \label{eq:dijkupd}
    d'(v) \gets \max\{d'(v), \min\{d'(u), w(u, v)\}\},
\end{equation}
and mark $v$ as \statuslabeled if $v$ is unsearched.
\end{itemize}
We use priority queue to maintain the order of labels for \statuslabeled nodes. This algorithm runs in $O(m + n \log n)$ time when Fibonacci heap \cite{fredman1987fibonacci} is used.

The algorithm we introduced above can also be adapted for solving \ModSSBPshort. The only thing we need to change is that in the initial stage all nodes are \statuslabeled and $d'(v) = h(v)$ for every $v \in V$. The resulting algorithm again runs in $O(m + n \log n)$ time. We shall call these two algorithms as \textit{Dijkstra's algorithm for SSBP} and \textit{Dijkstra's algorithm for \ModSSBPshort}, or simply call any of them \textit{Dijkstra's algorithm} when no confusion can arise.

\subsection{Weak and Strong Connectivity in Directed Graph}

We also need some definitions about connectivity in graph theory in this paper. A directed graph is said to be \textit{weakly-connected} if it turns to be a connected undirected graph when changing all of its directed edges to undirected edges. A directed graph is said to be \textit{strongly-connected} if every pair of nodes can be reached from each other. A \textit{weakly-} (or \textit{strongly-}) \textit{connected components} is defined to be a maximal weakly- (or strongly-) connected subgraph.

\section{Intuitions for SSBP} \label{sec:intu}

If all the edge weights are integers and are restricted in $\{1, \dots, c\}$, then SSBP can be solved in $O(m + c)$ time using Dijkstra's algorithm with bucket queue. If the edge weights are not necessarily small integers but all the edges given to us are already sorted by weights, then we can replace the edge weights by their ranks and use Dijkstra's algorithm with bucket queue to solve the problem in $O(m)$ time. However, edges are not sorted in general. If we sort the edges directly, then a cost of $\Omega(m \log m)$ time is unavoidable in a comparison-based model, which is more expensive than the $O(m + n \log n)$ running time of Dijkstra's algorithm.

Our algorithm is inspired by previous works on the single-source single-destination bottleneck path problem ($s$-$t$ BP): the $O(m \log^* n)$-time algorithm by Gabow and Tarjan~\cite{gabow1988algorithms} and the $O(m \beta(m, n))$-time algorithm by Chechik~et~al~\cite{chechik2016bottleneck}.
Gabow and Tarjan's algorithm for $s$-$t$ BP consists of several stages. Let $b(s, t)$ be the capacity of a bottleneck path from $s$ to $t$. Initially, we know that $b(s, t)$ is in the interval $(-\infty, +\infty)$. In each stage, we narrow the interval of possible values of $b(s, t)$.
Assume that $b(s, t)$ is known to be in the range $(l, r)$. Let $m_{(l,r)}$ be the the number of edges with weights in the range $(l, r)$ and $k$ be a parameter.
By applying the median-finding algorithm \cite{blum1973time} repeatedly, we choose $k$ thresholds $\lambda_1, \cdots, \lambda_k$ to split $(l, r)$ into $k + 1$ subintervals $(l, \lambda_1), [\lambda_1, \lambda_2), [\lambda_2, \lambda_3), \dots, [\lambda_{k-1}, \lambda_k), [\lambda_k, r)$ such that for each subinterval, there are $O(m_{(l,r)} / k)$ edges of weight in it.
Gabow and Tarjan then show that \textit{locating} which subinterval contains $b(s, t)$ can be done in $O(m_{(l,r)})$ time by incremental search. Finally, the $O(m \log^* n)$ running time bound is achieved by setting $k$ appropriately at each stage.

The algorithm by Chechik~et~al.~is based on the framework of Gabow and Tarjan's algorithm, but instead of selecting the thresholds $\lambda_1, \dots, \lambda_k$ by median-finding repeatedly in $O(m_{(l, r)} \log k)$ time, in this algorithm we select the $k$ thresholds by randomly sampling in edge weights, and sort them in $O(k \log k)$ time. These thresholds partition the edges evenly with high probability, but it requires $\Omega(m \log k)$ time to compute the partition explicitly.
Then they show that actually we can locate which subinterval contains $b(s, t)$ in $O(m_{(l, r)} + nk)$ (or $O(m_{(l, r)} + n \log k)$) time, without computing the explicit partition. The time bound for the overall algorithm is again obtained by setting $k$ appropriately at each stage.

We adapt Chechik~et~al.'s framework for the $s$-$t$ BP problem to the SSBP problem.
Our SSBP algorithm actually works on an equvialent problem called \ModSSBPshort. In \ModSSBPshort, there is no fixed source but every node has an \initcapa, and for all destination $t \in V$ we need to compute the capacity $d(t)$ of a bottleneck path ended with $t$ (See Section \ref{sec:def-bp} for details). Instead of locating the subinterval for a single destination $t$, our algorithm locates the subintervals for all destinations $t \in V$. Thus we adopt a divide-and-conquer methodology. At each recursion, we follow the idea from Chechik~et~al.~\cite{chechik2016bottleneck} to randomly sample $k$ thresholds. Then we \textit{split} the nodes into $k + 1$ levels $V_0, \dots, V_k$, where the $i$-th level contains the nodes $t$ that have $d(t)$ in the $i$-th subinterval ($0 \le i \le k$). For each level $V_i$ of nodes, we compute $d(t)$ for every $t \in V_i$ by reducing to solve the SSBP on a subgraph consisting of all the nodes in $V_i$ and some of the edges connecting them. We set $k$ to be fixed in all recursive calls, and the maximum recursion depth is $O(\log n / \log k)$ with high probability.

The split algorithm becomes the key part of our algorithm.
Note that at each recursion, we should reduce or avoid the use of operations that cost time $O(\log k)$ per node or per edge (e.g., binary searching for the subinterval containing a particular edge weight). This is because that, for example, if we execute an $O(\log k)$-time operation for each edge at each recursion, then the overall time cost is $O(m \log k) \cdot O(\log n / \log k) = O(m \log n)$, which means no improvement comparing with the previous $O(m + n \log n)$-time Dijkstra's algorithm.
Surprisingly, we can design an algorithm such that whenever we execute an $O(\log k)$-time operation, we can always find one edge that does not appear in any subsequent recursive calls. Thus total time complexity for such operations is $O(m \log k)$, which gives us some room to obtain a better time complexity by adjusting the parameter~$k$.

\section{Our Algorithm}
Our algorithm for SSBP is as follows: Given a SSBP instance $(G, w, s)$, we first reduce the SSBP problem to an \ModSSBPshort instance $(G, w, h)$, and then use a recursive algorithm (Figure \ref{fig:algomain}) to solve the \ModSSBPshort problem. The reduction is done by setting $h(s) = \max_{e\in E}\{w(e)\}$ and $h(v) = -\infty$ for all $v \neq s$ as described in the preliminaries.

For convenience, we assume that in the original graph, all  edge weights are distinct.
This assumption can be easily removed.

A high-level description of our recursive algorithm for \ModSSBPshort is shown in Figure \ref{fig:algomain}.
For two set $A$ and $B$, $A \uplus B$ stands for the union of $A$ and $B$ with the assumption that $A \cap B = \varnothing$. We use $E^{(r)}$ to denote the set of restricted edges in $G$, and similarly we use $E^{(r)}_i$ to denote the set of restricted edges in $G_i$ for each \ModSSBPshort instance $(G_i, w_i, h_i)$. When the thresholds $\lambda_0, \dots, \lambda_{k+1}$ are presence, we define the \textit{index} of $x$ for every $x \in \R$ to be the unique index $i$ such that $\lambda_i \le x < \lambda_{i+1}$, and we denote it as $I(x)$. For $x = \pm \infty$, we define $I(-\infty) = 0, I(+\infty) = l$. Note that all the subgraphs $G_i$ at Line \ref{algo:line:red} are disjoint. We denote $r = \lvert E \rvert - \sum_{i=0}^{l} \lvert E_i \rvert$ to be the total number of edges in $E$ that do not appear in any recursive calls of $(G_i, w_i, h_i)$. For an edge $(u, v) \in E$, if $u$ and $v$ belong to different levels, then we say that $(u, v)$ is \textit{cross-level}. If $u$ and $v$ belong to the same level $V_i$ and $w(u, v) < \lambda_i$, then we say that $(u, v)$ is \textit{below-level}; conversely, if $w(u, v) \ge \lambda_{i+1}$ then we say that $(u, v)$ is \textit{above-level}.

\begin{figure}[htbp]
\begin{algorithmic}[1]
\Require Directed graph $G=(V,E)$, weight function $w(\cdot)$, \initcapa function $h(\cdot)$; Parameter $k \ge 1$.
\Ensure For each $v \in V$, output $d(v)$, the capacity of a bottleneck path ended with $v$.
\If {$G$ is not weakly-connected} \label{algo:line:weak1}
    \State Compute $\{d(v)\}_{v \in V}$ in each weakly-connected component recursively. \label{algo:line:weak2}
\EndIf
\State Let $E^{(r)} = \{ e \in E \mid w(e) < +\infty\}$ be the set of restricted edges.
\If {$\lvert E^{(r)} \rvert \le 1$}
    \State Compute $\{d(v)\}_{v \in V}$ in linear time and exit. \label{algo:line:restr1} \Comment{Section \ref{sec:restr1}}
\EndIf
\State Sample $l = \min\{k, \lvert E^{(r)} \rvert\}$ distinct edges from $E^{(r)}$ uniformly randomly.
\State Sort the sampled edges by weights, and let $\lambda_1 < \cdots < \lambda_l$ be their weights. \label{algo:line:sort}
\State Let $\lambda_0 = -\infty, \lambda_{l + 1} = +\infty$.
\State Split $V$ into $l+1$ levels: $V_0 \uplus V_1 \uplus \cdots \uplus V_l$, where $V_i = \{v \in V \mid \lambda_i \le d(v) < \lambda_{i+1} \}$ \label{algo:line:spl}.

\Comment{Section \ref{sec:spl}} 
\State For every level $V_i$, reduce the computation of $\{d(v)\}_{v \in V_i}$ to a new \ModSSBPshort instance $(G_i, w_i, h_i)$, where $G_i = (V_i, E_i)$ is a subgraph of $G$ consisting of all the nodes in $V_i$ and some of edges that connect them. Solve each $(G_i, w_i, h_i)$ instance recursively. \label{algo:line:red}
\end{algorithmic}
\caption{Main Algorithm}  \label{fig:algomain}
\end{figure}

Besides the problem instance $(G, w, h)$ of \ModSSBPshort, our algorithm requires an additional integral parameter $k$. We set the parameter $k = 2^{\Theta\left(\sqrt{\log n}\right)}$ throughout our algorithm. The value of the parameter $k$ does not affect the correctness of our algorithm, but it controls the number of recursive calls at each recursion.

At each recursion, our algorithm first checks if $G$ contains only one weakly-connected component. If not, then our algorithm calls itself to compute $d$ in each weakly-connected component recursively. Now we can assume that $G$ is weakly-connected (so $n \le m$).

If the number of restricted edges is no more than $1$, we claim that we can compute $d(v)$ for all~$v$ in linear time. The specific algorithm will be introduced in Section \ref{sec:restr1}.

\begin{lemma} \label{lam:restr1}
\ModSSBPshort can be solved in $O(m)$ time if there is at most one restricted edge.
\end{lemma}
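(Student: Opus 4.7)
The plan is to dispose of the base case by reducing each of the two sub-cases, $|E^{(r)}|=0$ and $|E^{(r)}|=1$, to a reachability-plus-maximum computation on the subgraph of unrestricted edges; this is then solvable by standard strongly-connected-component (SCC) contraction plus one reverse-topological pass in $O(n+m)$ time, which equals $O(m)$ because Lines \ref{algo:line:weak1}--\ref{algo:line:weak2} of the main algorithm guarantee that $G$ is weakly-connected (hence $n \le m+1$) by the time this lemma is invoked.

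If $|E^{(r)}|=0$, every edge has weight $+\infty$, so the capacity of every path equals the \initcapa of its starting vertex and $d(t) = \max\{h(v) : v \text{ reaches } t\}$. I would compute the SCCs of $G$ in linear time, carry inside each SCC the maximum of $h$, and then run a single reverse-topological pass on the condensation DAG that propagates, to each SCC $C$, the maximum of $h$ taken over all SCCs that can reach $C$; this value is assigned to every $t \in C$.

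For $|E^{(r)}|=1$, let $(u^*,v^*)$ be the restricted edge with weight $w^*$ and let $G'$ be $G$ with $(u^*,v^*)$ removed. The central observation is that a bottleneck path ended with $t$ either avoids $(u^*,v^*)$ entirely or uses it at least once; in the latter case a single use is without loss of generality, because extra traversals of the same edge cannot raise the minimum of the weights encountered along the path. Such a path therefore has the form $v \leadsto u^* \to v^* \leadsto t$ with both $\leadsto$ segments in $G'$, contributing capacity $\min(h(v), w^*)$. Writing $A(t) = \max\{h(v) : v \text{ reaches } t \text{ in } G'\}$ and $H = A(u^*)$, this yields the clean decomposition
\[
d(t) = \max\bigl(A(t),\, B(t)\bigr), \qquad B(t) = \begin{cases} \min(w^*, H) & \text{if } v^* \text{ reaches } t \text{ in } G', \\ -\infty & \text{otherwise.} \end{cases}
\]

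I would compute $A(\cdot)$ by the same SCC-plus-DAG-DP procedure as in the first case, but applied to $G'$; this simultaneously produces the scalar $H$. The set $\{t : v^* \text{ reaches } t \text{ in } G'\}$ is then obtained from a single BFS or DFS out of $v^*$ in $G'$, after which $d$ is assembled by one linear sweep over $V$. Every step runs in $O(n+m) = O(m)$. I do not anticipate a real obstacle; the one point that deserves a sentence of justification is the ``one use suffices'' claim for $(u^*,v^*)$, but this is immediate from the definition of path capacity as the minimum of traversed edge weights, since duplicating an already-traversed edge in a walk cannot increase that minimum.
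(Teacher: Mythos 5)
Your proposal is correct and follows essentially the same route as the paper's own proof: the paper also dispatches the two sub-cases separately, reduces the zero-restricted-edge case to an SCC-contraction plus a linear pass over the condensation DAG, and handles the one-restricted-edge case by first computing $d$ on $G$ minus the restricted edge and then pushing the value $\min\{d(u_0), w(u_0,v_0)\}$ to every vertex reachable from $v_0$. The only cosmetic difference is that you spell out the decomposition $d(t) = \max(A(t), B(t))$ and explicitly justify the ``one traversal of the restricted edge suffices'' point, which the paper leaves implicit.
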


If the number of restricted edges is more than $1$, then our algorithm first sample $l = \min\{k, \lvert E^{(r)} \rvert\}$ distinct edges from $E^{(r)}$ uniformly randomly and sort them by weights, that is, if the number of restricted edges is more than $k$, then we sample $k$ distinct restricted edges and sort them; otherwise we just sort all the restricted edges. Let $\lambda_i$ be the weight of the edge with rank $i$ ($1 \le i \le l$) and $\lambda_0 = -\infty$, $\lambda_{l + 1} = +\infty$.

Next, we split $V$ into $l + 1$ levels of nodes $V = V_0 \uplus V_1 \uplus \cdots V_l$, where the $i$-th level of nodes is $V_i = \{v \in V \mid I(d(v)) = i \} = \{v \in V \mid \lambda_i \le d(v) < \lambda_{i + 1}\}$. The basic idea of the split algorithm is: we run Dijkstra's algorithm for \ModSSBPshort on the graph produced by mapping every edge weight $w(e)$ and initial capacity $h(v)$ in $G$ to their indices $I(w(e))$ and $I(h(v))$, and we obtain the final label value $d'(v)$ for each node $v \in V$ (Remember that $d'(v)$ is the label of $v$ in Dijkstra's algorithm). It is easy to show that the final label value $d'(v)$ equals $I(d(v))$, so the nodes can be easily split into levels according to their final labels.
The specific split algorithm will be introduced in Section \ref{sec:spl}. The time complexity for a single splitting is given below.
In Theorem \ref{thm:overall-k} we show that this implies that the total time cost for splitting is $O(m \log n / \log k + m \log k)$.

\begin{lemma} \label{lam:spl}
Splitting $V$ into levels at Line \ref{algo:line:spl} can be done in $O(m + r \log k)$ (Recall that $r$ is the number of edges that do not appear in any subsequent recursive calls).
\end{lemma}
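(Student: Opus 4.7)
The plan is to implement Dijkstra's algorithm for \ModSSBPshort on the implicitly-mapped graph with weights $I(w(\cdot))$ and initial capacities $I(h(\cdot))$, using a bucket queue with buckets $B_0, \dots, B_l$ in place of a Fibonacci heap. Since the mapped labels all lie in $\{0, 1, \dots, l\}$, the Dijkstra bookkeeping itself costs $O(m + n + l) = O(m)$ (using $n \le m$ in a weakly-connected graph and $l \le m$ by definition). Correctness follows from the standard Dijkstra invariant applied to the mapped weights: the final label $d'(v)$ equals $I(d(v))$, so reading off the buckets yields the required level decomposition. The cost to be tamed is the mapping itself, since a naive binary search for every edge already costs $O(m \log k)$, which exceeds the budget.

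The key efficiency trick is that when scanning a node $u$ from bucket $B_i$ (whose final label is $d'(u) = i$), relaxing an outgoing edge $(u, v)$ only requires the value $\min(i, I(w(u, v)))$, not the exact index $I(w(u, v))$. I first compare $w(u, v)$ with the single threshold $\lambda_i$ in $O(1)$. If $w(u, v) \ge \lambda_i$, then $I(w(u, v)) \ge i$ and the relaxation uses value $i$, which is handled in $O(1)$ by promoting $v$ to $B_i$ whenever its current label is smaller. Only when $w(u, v) < \lambda_i$ do I run a binary search over $\lambda_1, \dots, \lambda_l$ at cost $O(\log k)$, and use its outcome $j$ to promote $v$ into $B_j$.

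The charging argument is the heart of the proof. Whenever a binary search is triggered on an edge $(u, v)$, we have $u \in V_i$ and $w(u, v) < \lambda_i$. If $v$ belongs to a different level $V_j$, then $(u, v)$ is cross-level and excluded from every $E_j$; if $v \in V_i$, then $w(u, v) < \lambda_i$ makes $(u, v)$ below-level inside $V_i$ and again excluded from $E_i$. In either case $(u, v)$ contributes to $r$, so the binary searches are charged to distinct edges of $r$ and together cost $O(r \log k)$, which together with the $O(m)$ bookkeeping gives the claimed bound.

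The subtlety I expect to require the most care is the initial bucketing of the $h(v)$ values: nodes with $h(v) \in \{\pm\infty\}$ can be placed in $B_0$ or $B_l$ in constant time each, but nodes with finite $h(v)$ ostensibly need an $O(\log k)$ binary search apiece. I would fold these initializations into the same framework by conceptually introducing a virtual super-source $s^*$ with an edge of weight $h(v)$ to every $v$ whose initial capacity is finite, so that the same threshold-versus-binary-search mechanism processes them and every expensive binary search lands on an edge that is cross-level relative to $v$'s eventual bucket. Making this accounting tight---using that finite-$h$ entries in a recursive call arise from cross-level edges of the parent, which are already accounted for in the parent's $r$---and verifying that the bucket queue correctly tolerates repeated upward promotions of a node without rescanning it, is where I expect the bulk of the bookkeeping in the formal proof to live.
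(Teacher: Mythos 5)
Your treatment of the edge-weight index evaluations is correct and matches the paper's Lemma~\ref{lam:qe}: only binary-search when $w(u,v) < \lambda_{d'(u)}$, and observe that such an edge must be cross-level or below-level, hence one of the $r$ edges absent from all recursive calls. The bucket-queue bookkeeping is also the right framework.

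The gap is in the initial capacities, and your proposed fix does not close it. Introducing a super-source $s^*$ with virtual edges $(s^*, v)$ of weight $h(v)$ and applying the same ``compare with $\lambda_{d'(s^*)}$ first'' trick does not help: $s^*$ has label $l$, so every $v$ with finite $h(v) < \lambda_l$ triggers a binary search, and these virtual edges are not elements of $E$, hence cannot be charged to $r$. In the worst case this is $\Omega(n)$ binary searches, i.e.\ $\Omega(n \log k)$ time, blowing the budget whenever $r$ is small. The appeal to ``charging to cross-level edges of the parent'' also fails: at the top level there is no parent, and in general the parent's $r$ edges are already paying for the parent's own work, so double-charging them would break the total-time accounting in Theorem~\ref{thm:overall-k}. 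What the paper actually does is structurally different: it finds a spanning tree of $G$, partitions it (Theorem~\ref{thm:partition}) into $b = O(n/s)$ edge-disjoint subtrees of size $s = \Theta(\log l)$, forms groups $U_1, \ldots, U_b$, and maintains each group's maximum-$h$ ``initializing node'' lazily; an $I(h(\cdot))$ evaluation is performed only when a group's current maximum is exhausted and the next maximum lands in a strictly lower bucket. Lemma~\ref{lam:diffdp} then shows that $c$ such evaluations in a group force $c$ distinct final labels, hence $c-1$ cross-level tree edges in that subtree, so the total number of $h$-evaluations is at most $r + b = O(r + n/\log l)$, which is exactly what makes the $O(m + r\log k)$ bound work. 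Without this tree-partition/group mechanism, your argument bounds only the edge-weight evaluations and leaves the $O(n \log k)$ initialization cost unaccounted for.
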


Finally, for every level $V_i$, we compute $d(\cdot)$ for nodes in this level by reducing to a new \ModSSBPshort instance $(G_i, w_i, h_i)$, where $G_i$ is a subgraph of $G$ consisting of all the nodes in $V_i$ and some of edges that connect them. We solve each new instance by a recursive call. The construction of $(G_i, w_i, h_i)$ is as follows:
\begin{itemize}
    \item $G_i = (V_i, E_i)$, where $V_i$ is the nodes at level $i$ in $G$, and $E_i$ is the set of edges 
    which connect two nodes at level $i$ and are not below-level, i.e., $E_i = \{ (u, v) \in E \mid u, v \in V_i, w(u, v) \ge \lambda_i \}$;
    \item For any $e \in E_i$, $w_i(e) = +\infty$ if $e$ is above-level; otherwise $w_i(e) = w(e)$;
    \item For any $v \in V_i$, $h_i(v) = \max\left( \{h(v)\} \cup \left\{ w(u, v) \in E \mid u \in V_{i+1} \uplus \cdots \uplus V_l \right\} \right)$.
\end{itemize}
\begin{lemma} \label{lam:red}
We can construct all the new \ModSSBPshort instances $(G_0, w_0, h_0), \dots, (G_l, w_l, h_l)$ in $O(m)$ time.
For $v \in V_i$, the value of $d(v)$ in the instance $(G_i, w_i, h_i)$ exactly equals to the value of $d(v)$ in the instance $(G, w, h)$.
\end{lemma}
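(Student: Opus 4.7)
The plan splits naturally into the $O(m)$ \textbf{construction bound} and the \textbf{correctness claim} $d_i(v) = d(v)$ (where $d_i$ denotes the $d$-value in the sub-instance $(G_i, w_i, h_i)$). For the construction, I assume the split step has already tagged each node with its level index. I initialize $h_i(v) = h(v)$ for every $v \in V_i$. Then I scan each edge $(u, v) \in E$ once: if $u, v \in V_i$ for some common $i$, two comparisons of $w(u, v)$ against $\lambda_i$ and $\lambda_{i+1}$ decide whether the edge is below-level (dropped), above-level (added to $E_i$ with $w_i(e) = +\infty$), or in between (added with $w_i(e) = w(u,v)$); if instead $v \in V_i$ and $u \in V_{i'}$ with $i' > i$, I update $h_i(v) \gets \max(h_i(v), w(u, v))$. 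Each edge and each node contributes $O(1)$ work, yielding $O(n + m) = O(m)$ time overall.

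For correctness I rely on one preliminary fact: for every $v \in V_i$, $h_i(v) \le d(v) < \lambda_{i+1}$. Indeed $h(v) \le d(v) < \lambda_{i+1}$, since the length-zero path witnesses $h(v) \le d(v)$ and $v \in V_i$ gives $d(v) < \lambda_{i+1}$; and for any cross-level incoming edge $(u, v)$ with $u \in V_{i'}$, $i' > i$, concatenating a bottleneck path to $u$ (capacity $d(u) \ge \lambda_{i+1}$) with $(u, v)$ gives $d(v) \ge \min(d(u), w(u, v))$, which combined with $d(v) < \lambda_{i+1} \le d(u)$ forces $w(u, v) \le d(v) < \lambda_{i+1}$. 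Taking the max over all such contributions yields the bound. For $d_i(v) \ge d(v)$, I take a bottleneck path $v_0, \dots, v_k = v$ in $G$ and let $j$ be the smallest index with $v_j, \dots, v_k \in V_i$. All suffix edges have weight $\ge d(v) \ge \lambda_i$, so they lie in $E_i$ with $w_i \ge w \ge d(v)$. If $j = 0$ then $h_i(v_0) \ge h(v_0) \ge d(v)$; if $j > 0$, then $v_{j-1}$ cannot lie in a lower level (otherwise the prefix capacity would be capped by $d(v_{j-1}) < \lambda_i \le d(v)$, contradicting that the prefix is a subpath of a bottleneck of capacity $\ge d(v)$), so $v_{j-1}$ sits in a strictly higher level and $h_i(v_j) \ge w(v_{j-1}, v_j) \ge d(v)$. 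Either way the suffix is a valid $(G_i, w_i, h_i)$-path of capacity $\ge d(v)$.

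For $d_i(v) \le d(v)$, I take an optimal path $v_j, \dots, v_k = v$ in $(G_i, w_i, h_i)$. By the preliminary fact, $d_i(v) \le h_i(v_j) < \lambda_{i+1}$, so every above-level edge on the path has original weight $\ge \lambda_{i+1} > d_i(v)$, and every other edge satisfies $w = w_i \ge d_i(v)$. If $h_i(v_j) = h(v_j)$ then the same suffix viewed in $G$ with starting capacity $h(v_j)$ already witnesses $d(v) \ge d_i(v)$; otherwise $h_i(v_j) = w(u, v_j)$ for some $u$ in a strictly higher level, and prepending any $G$-bottleneck path to $u$ (capacity $d(u) \ge \lambda_{i+1} > d_i(v)$) together with the edge $(u, v_j)$ (weight $h_i(v_j) \ge d_i(v)$) yields a valid $G$-path of capacity $\ge d_i(v)$. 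The \textbf{main subtlety} to watch out for is precisely the interplay between the $+\infty$ reweighting of above-level edges and the definition of $h_i$: one needs $h_i(v_j) < \lambda_{i+1}$ in order to ``hide'' above-level edges safely inside any $(G_i, w_i, h_i)$-path, yet simultaneously needs $h_i$ large enough to absorb whatever entry capacity a true $G$-bottleneck path can bring in from higher levels. The preliminary bound together with the dichotomy on whether $h_i(v_j)$ is realized by $h(v_j)$ or by a cross-level incoming edge is exactly what reconciles these two requirements.
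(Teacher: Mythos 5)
Your proof is correct, and it takes a genuinely different route from the paper's. The paper establishes correctness by a sequence of $d$-preserving graph transformations: delete nodes below level $i$ and edges of weight $<\lambda_i$, cap weights $\ge\lambda_{i+1}$ at $+\infty$, then contract $V_{i+1}\uplus\cdots\uplus V_l$ to a single node $v_0$ with infinite initial capacity and fold $v_0$'s outgoing edge weights into $h_i$. This is compact and conceptually clean, but the contraction step implicitly relies on two observations (every $u$ in a higher level has $d(u)\ge\lambda_{i+1}$, and every edge from a higher level into $V_i$ has weight $<\lambda_{i+1}$) whose interaction with paths that re-enter and re-leave the higher levels is only sketched. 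Your proof instead verifies the two inequalities $d_i(v)\ge d(v)$ and $d_i(v)\le d(v)$ directly by path surgery: for $\ge$, you truncate a $G$-bottleneck path at the last entry into $V_i$ and show the suffix survives in $(G_i,w_i,h_i)$ with no loss of capacity (the key being that the predecessor $v_{j-1}$, if any, must be in a strictly higher level, which feeds $h_i(v_j)$); for $\le$, you use the preliminary bound $h_i(v_j)<\lambda_{i+1}$ to argue that the $+\infty$ reweighting of above-level edges is harmless, and then either reinterpret the path in $G$ or prepend a $G$-bottleneck path through a higher-level node. The trade-off: the paper's transformation argument is shorter and makes the construction feel canonical, while your two-sided path argument is more elementary, fully spells out the subtlety around $+\infty$ edges and the definition of $h_i$, and needs no appeal to contraction preserving bottleneck values. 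Both are valid; yours is more verbose but arguably more rigorous at the single delicate step.
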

\begin{proof}
We can construct all these instances $(G_i, w_i, h_i)$ for all $0 \le i \le l$ by linearly scanning all the nodes and edges, which runs in $O(m)$ time. We prove the correctness by transforming $(G, w, h)$ to $(G_i, w_i, h_i)$ step by step, while preserving the values of $d(v)$ for all $v \in V_i$.

By definition, $\lambda_i \le d(v) < \lambda_{i+1}$ for all $v \in V_i$. We can delete all the nodes at level less than $i$ and delete all the edges with weight less than $\lambda_i$, since no bottleneck path ended with a node in $V_i$ can traverse them. Also, for every edge $e$ with weight $w(e) \ge \lambda_{i+1}$, we can replace the edge weight with $+\infty$ since $w(e)$ is certainly not the minimum in any path ended with a node in $V_i$. 

For every edge $e = (u, v)$ where $u \in V_{i+1} \uplus \cdots \uplus V_l$ and $v \in V_i$, the edge weight $w(e)$ must be less than $\lambda_{i+1}$, otherwise $d(v) \ge \min\{d(u), w(u, v)\} \ge \lambda_{i+1}$ leads to a contradiction. Thus contracting all the nodes in $V_{i+1} \uplus \cdots \uplus V_l$ to a single node $v_0$ with infinite initial capacity is a transformation preserving the values of $d(v)$ for all $v \in V_i$. Finally, our construction follows by taking $h_i(v)$ to be the maximum between the weight of incoming edges from $v_0$ and the initial capacity $h(v)$ for every $v \in V_i$.
\end{proof}

\begin{remark}
In any subsequent recursive calls of $(G_i, w_i, h_i)$, neither cross-level nor below-level edges will appear, and all the above-level edges will become unrestricted. Also, it is easy to see that $r$ is just the total number of cross-level and below-level edges (Recall that $r$ is the number of edges that do not appear in any subsequent recursive calls).
\end{remark}

\subsection{Running Time}
First we analyze the maximum recursion depth. The following lemma shows that randomly sampled thresholds evenly partition the restricted edges with high probability.
\begin{lemma} \label{lam:samplee}
Let $E^{(r)} = \left\{e_1, \dots, e_q\right\}$ be $q$ restricted edges sorted by their weights in $E$. Let $f_1, \dots, f_k$ be $k \ge 2$ random edges sampled from $E^{(r)}$ such that $w(f_1) < w(f_2) < \cdots < w(f_k)$. Let $\lambda_i = w(f_i)$ for $i = 1, \dots, k$, and $\lambda_0 = -\infty$, $\lambda_{k + 1} = +\infty$. Let $F_i = \{ e \in E \mid \lambda_i \le w(e) < \lambda_{i+1} \}$. Then for every $t > 0$, $\max_{0 \le i \le k} \{\lvert F_i \rvert\} < t q \log k / k$ holds with probability $1 - k^{-\Omega(t)}$.
\end{lemma}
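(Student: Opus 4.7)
The plan is to reduce the statement to a tail bound for sampling without replacement from the restricted edges. First I would observe that for every $i$, every edge $e \in F_i$ satisfies $w(e) < \lambda_{k+1} = +\infty$ and is therefore restricted, so the $F_i$'s partition $E^{(r)}$. Listing the restricted edges in increasing weight order as $e_1, \dots, e_q$, the random sample $\{f_1, \dots, f_k\}$ corresponds to a uniformly random $k$-subset $J \subseteq \{1, \dots, q\}$, and the event $\max_i |F_i| \ge s$ becomes exactly the event that $J$ misses some window of $s$ consecutive indices in $\{1, \dots, q\}$.

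I would then set $s = \lceil t q \log k / k \rceil$ and bound $\Pr[J \text{ misses a window of length } s]$. A direct union bound over all $\Theta(q)$ windows is wasteful because $q$ can be arbitrarily larger than $k$: the per-window probability will only be $k^{-\Theta(t)}$, and a factor of $q$ on the outside would not cancel. Instead, I would partition $\{1, \dots, q\}$ into contiguous blocks of size $b = \lfloor s/2 \rfloor$; since $2b \le s$, an elementary sliding argument shows every window of $s$ consecutive indices fully contains at least one such block. It therefore suffices to union-bound the event $J \cap B = \varnothing$ over only the $\lceil q/b \rceil = O(k/(t \log k))$ blocks.

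For a fixed block $B$ of size $b$, sampling without replacement gives $\Pr[J \cap B = \varnothing] = \binom{q-b}{k} \big/ \binom{q}{k} \le (1 - b/q)^k \le \exp(-bk/q)$. Using $b \ge s/2 - 1 \ge t q \log k / (2k) - 1$ together with $k/q \le 1$ (since we must sample $k$ distinct restricted edges, forcing $q \ge k$) bounds this by $e \cdot k^{-t/2}$. Multiplying by the $O(k/(t \log k))$ blocks yields the overall bound $k^{-\Omega(t)}$ claimed in the lemma.

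The main obstacle I would expect is precisely this union-bound cost: the per-window probability alone is not strong enough to afford a naive sum over all $\Theta(q)$ windows, and any attempt to sharpen the per-window probability would require finer concentration for hypergeometric-type sampling. The block refinement sidesteps this by trading a harmless factor of $2$ inside the exponent for only $O(k/\log k)$ events to union-bound, after which the $k^{-t/2}$ factor easily absorbs the prefactor.
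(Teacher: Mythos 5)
Your proof is correct, and it takes a genuinely different route from the paper's. The paper unions over all $\Theta(q)$ possible starting positions $p$ of a long empty run, but observes that for $i\ge 1$ a large $F_i$ must begin with a \emph{sampled} edge $f_i$; requiring $e_p$ to be a sample contributes a factor $k/q$ to the per-position probability, and this factor is exactly what cancels the $q$ positions in the union bound, giving $k(1-t\log k/k)^{k-1}\le k^{-\Omega(t)}$. Your argument instead tiles $\{1,\dots,q\}$ with $O(k/(t\log k))$ disjoint blocks of half the window size and unions only over blocks, using the elementary fact that any sufficiently long window of consecutive indices must swallow a full block. Both yield $k^{-\Omega(t)}$. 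Your covering argument is the more standard/modular way to bound the largest gap in a uniform $k$-subset of $\{1,\dots,q\}$, and it handles the boundary level $F_0$ (which has no sampled left endpoint $f_i$) with no special casing, whereas the paper's proof quietly glosses over that case. One small inaccuracy: ``$\max_i|F_i|\ge s$'' is not exactly the event that $J$ misses a window of $s$ consecutive indices — for $i\ge 1$ the sampled edge $f_i$ lies in $F_i$, so $|F_i|\ge s$ only forces a missing window of length $s-1$. Taking $b=\lfloor (s-1)/2\rfloor$ (or just carrying the $s-1$ through) fixes this with no change to the asymptotics, and of course one should also note $q\ge k$ so $b\ge 1$ whenever the bound is nontrivial.
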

\begin{proof}
Let $M = tq\log k / k$. If $\max_{0 \le i \le k} \{\lvert F_i \rvert\} \ge M$, then there exists an edge $e_p$ such that $e_p$ is chosen but for any $p + 1 \le j < p + M$, $e_j$ is not chosen. Note that when $p$ is given, this event happens with probability $\le k \cdot (1/q) \cdot \prod_{i=1}^{k-1} ((q-M-i)/(q-i)) \le (k/q) \cdot (1 - M/q)^{k-1}$. By the union bound for all possible $p$, we have
\begin{equation*}
\Pr\left[\max_{0 \le i \le k} \{\lvert F_i \rvert\} \ge t q \log k/k\right] \le k (1 - t \log k / k)^{k-1} \le k^{-\Omega(t)},
\end{equation*}
which completes the proof.
\end{proof}
For our purposes it is enough to analyze the case that $k = 2^{\Omega\left(\sqrt{\log n}\right)}$. The following lemma gives a bound for the maximum recursion depth using Lemma \ref{lam:samplee}.
\begin{lemma}
For $k = 2^{\Omega\left(\sqrt{\log n}\right)}$ where $n$ is the number of nodes at the top level of recursion, the maximum recursion depth is $O(\log n / \log k)$ with probability $1 - n^{-\omega(1)}$.
\end{lemma}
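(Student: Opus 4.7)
The plan is to show that each split step (Line~\ref{algo:line:red} of Figure~\ref{fig:algomain}) shrinks the number of restricted edges in every child instance by a factor of $\Omega(k/\log k)$ with high probability, and then to count how many such steps can occur along any root-to-leaf path of the recursion tree before the base case (Line~\ref{algo:line:restr1}) fires. First I would fix a parameter $t = \log n$ and apply Lemma~\ref{lam:samplee} to a generic call with $q = |E^{(r)}|$ restricted edges: with probability $1 - k^{-\Omega(t)}$, every bin $F_i$ has size at most $tq\log k/k$. Inspecting the reduction in Line~\ref{algo:line:red}, the restricted edges of each child instance $(G_i, w_i, h_i)$ are exactly the edges $(u,v)$ with $u, v \in V_i$ and $\lambda_i \le w(u,v) < \lambda_{i+1}$, so they form a subset of $F_i$, and the same upper bound applies to the child's restricted-edge count.

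Next I would take a union bound over all recursive calls. Along any root-to-leaf path, each split step strictly decreases $q$ (for $k \ge 2$ and $q \ge 2$ each $F_i$ misses at least one of the sampled restricted edges), and weakly-connected decompositions (Line~\ref{algo:line:weak2}) cannot occur twice in a row because their children are already weakly-connected; hence each path has length at most $2m+1$. Since calls at the same depth refine an initial node partition of $V$, each depth contains at most $n$ calls, giving at most $O(n^3)$ calls in the entire recursion tree. With $t = \log n$, the per-call failure probability is $k^{-\Omega(\log n)} = 2^{-\Omega(\log n \cdot \sqrt{\log n})} = n^{-\omega(1)}$, so the union bound yields probability $1 - n^{-\omega(1)}$ that every split obeys the bound above.

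Conditioning on this good event, a path with $D_s$ split steps leaves at most $q_0(t\log k/k)^{D_s}$ restricted edges, where $q_0 \le m \le n^2$. Since $\log k = \Omega(\sqrt{\log n})$ and $\log t + \log\log k = O(\log\log n) = o(\sqrt{\log n})$, we have $\log(k/(t\log k)) = \Theta(\log k)$, so the expression drops below $2$ once $D_s = O(\log n/\log k)$, at which point Line~\ref{algo:line:restr1} terminates the branch. With at most one weakly-connected step between consecutive splits, the total depth is $O(\log n/\log k)$.

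The main obstacle is balancing the two competing demands on $t$: it must be $\omega(\sqrt{\log n})$ so that the union bound over the polynomially many calls survives, yet $\log t = o(\sqrt{\log n})$ so that $k/(t\log k)$ remains $k^{\Theta(1)}$ and every split gains an $\Omega(\log k)$ factor in the exponent. Both are satisfied by $t = \log n$, so beyond verifying the two edge-counting claims (that the restricted edges of $G_i$ sit inside $F_i$, and that same-depth calls have pairwise disjoint node sets), the argument is mostly bookkeeping.
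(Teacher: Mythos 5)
Your proof is correct and follows essentially the same route as the paper: apply Lemma~\ref{lam:samplee} with $t=\Theta(\log n)$, observe that the restricted edges of each child $(G_i,w_i,h_i)$ sit inside $F_i$, union-bound over polynomially many calls, and then read off the $O(\log n/\log k)$ depth. Two small remarks. First, the paper gets away with the simpler (and tighter) claim that the total number of recursive calls is $O(m)$; your $O(n^3)$ bound via a path-length argument works just as well for the union bound, but it is more work than needed. Second, your statement ``a path with $D_s$ split steps leaves at most $q_0(t\log k/k)^{D_s}$ restricted edges'' implicitly applies Lemma~\ref{lam:samplee} with $l=k$ samples at every step, which is only valid while $|E^{(r)}|>k$. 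Once $|E^{(r)}|\le k$ the algorithm samples $l=|E^{(r)}|<k$ edges, for which Lemma~\ref{lam:samplee} gives only the weaker $t\,q\log l/l$ bound; however, in that regime the $\lambda_i$'s are exactly the restricted edge weights, so each child deterministically inherits at most one restricted edge and the base case fires on the next call. The paper phrases the argument so as to stop applying the lemma once $|E^{(r)}|\le k$, which sidesteps this; your version should either add that clause or invoke your strict-decrease observation to finish once $q<k$. Neither point affects correctness.
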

\begin{proof}
It is not hard to see that the total number of recursive calls of our main algorithm is $O(m)$.
Applying Lemma \ref{lam:samplee} with $t = \Theta(\log n)$ and the union bound for all recursive calls, we know that with probability at least $1 - k^{-\Omega(t)} \cdot O(m) = 1 - n^{-\Omega(\log n)^{3/2}}$, after every split with $\lvert E^{(r)} \rvert > k$,
the number of restricted edges in $G_i$ is less than $(t \log k / k) \cdot \lvert E^{(r)} \rvert$ for every $(G_i, w_i, h_i)$. Thus after $O(\log m / \log (k/(t \log k))) = O(\log n / \log k)$ levels of recursion, every \ModSSBPshort instance $(G, w, h)$ has $\lvert E^{(r)}\rvert \le k$, and this means that in any recursive call of $(G_i, w_i, h_i)$, the graph $G_i$ has at most one restricted edge, which will be directly solved at Line~\ref{algo:line:restr1}.
\end{proof}

The overall time complexity of our algorithm is given by the following theorem:
\begin{theorem} \label{thm:overall-k}
For $k = 2^{\Omega\left(\sqrt{\log n}\right)}$, with probability $1 - n^{-\omega(1)}$, our main algorithm shown in Figure \ref{fig:algomain} runs in $O(m \log n / \log k + m \log k)$ time.
\end{theorem}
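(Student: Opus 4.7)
The plan is to bound the total work across the recursion tree of the main algorithm by analyzing per-call work and then summing over all depths. Let $T$ denote the recursion tree, and condition on the high-probability event of the preceding lemma that every root-to-leaf path of $T$ has length $D = O(\log n / \log k)$.

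For a single recursive call $c$ operating on an instance $(G_c, w_c, h_c)$ with $m_c = \lvert E_c\rvert$ edges, $l_c \le k$ sampled edges, and $r_c$ edges that disappear (cross-level or below-level), I would decompose the work: the weak-connectivity check, restricted-edge enumeration, base case from Lemma \ref{lam:restr1}, and the subinstance construction from Lemma \ref{lam:red} together contribute $O(m_c)$; the sort at Line \ref{algo:line:sort} contributes $O(l_c \log l_c) = O(l_c \log k)$; and the split at Line \ref{algo:line:spl} contributes $O(m_c + r_c \log k)$ by Lemma \ref{lam:spl}. Thus the per-call charge is $O(m_c + r_c \log k + l_c \log k)$.

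To sum across the tree I would use two partition identities. At any fixed depth $d$, the subgraphs of calls at that depth are pairwise edge-disjoint, since both the weakly-connected and level splittings produce disjoint edge sets, so $\sum_{c \text{ at depth } d} m_c \le m$ and hence $\sum_c m_c \le m D = O(m \log n / \log k)$. Moreover, every edge of $G$ can disappear from the recursion at most once, so $\sum_c r_c \le m$ and $\sum_c r_c \log k = O(m \log k)$.

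The main obstacle I anticipate is bounding the sort contribution $\sum_c l_c \log k$, since a straightforward depth-by-depth argument $\sum_{c \text{ at depth } d} l_c \le m$ gives $O(m D \log k) = O(m \log n)$, too weak by a factor of $\log k$. I would therefore charge the $O(\log k)$ cost of each sampled edge more carefully: if the edge disappears it is absorbed by the $r_c \log k$ budget; if it becomes unrestricted at this level, charge it to the at-most-one transition $\textrm{restricted} \to \textrm{unrestricted}$ that each edge undergoes; and if it survives as restricted in a unique descendant, telescope the charges along its recursion path and combine with the balanced-sampling guarantee from Lemma \ref{lam:samplee}. Summing the three pieces gives the claimed $O(m \log n / \log k + m \log k)$ running time, with failure probability $n^{-\omega(1)}$ inherited from the depth and sampling analyses.
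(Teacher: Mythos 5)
Your per-call decomposition matches the paper's, and the accounting $\sum_c m_c \le mD = O(m\log n/\log k)$ and $\sum_c r_c \le m$ are both correct. You also correctly anticipate the one place the naive argument fails: the total sort cost $\sum_c l_c \log k$. Your first two sub-cases are sound — a sampled edge that disappears (cross-level or below-level) charges to $r_c$, and a sampled edge that becomes unrestricted (above-level) charges to the one-time transition, which is exactly the role the paper's quantity $r' = \lvert E^{(r)}\rvert - \sum_i \lvert E_i^{(r)}\rvert$ plays.

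The gap is in your third sub-case, the sampled edges $e_i$ that survive as restricted edges in some child instance. You propose to ``telescope the charges along its recursion path and combine with the balanced-sampling guarantee from Lemma~\ref{lam:samplee},'' but this is not an argument: nothing in the telescoping prevents a single edge from being resampled at every depth (costing $\Theta(\log k)$ each time and $\Theta(D\log k) = \Theta(\log n)$ per edge in total), and Lemma~\ref{lam:samplee} controls only the sizes of the resulting level classes, not which specific edges get resampled. The paper closes this case with a purely structural observation that requires no probability at all. Since the split algorithm is only invoked on a weakly-connected $G$, if $l_1$ of the sampled ranks $i$ have nonempty level $V_i$, then any spanning tree of $G$ must cross between levels at least $l_1 - 1$ times, so there are at least $l_1 - 1$ cross-level edges, i.e.\ $l_1 \le r + 1$. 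The remaining $l_2 = l - l_1$ sampled edges are cross-, below-, or above-level and hence do not survive as restricted, giving $l_2 \le r'$. Thus $l \le r + r' + 1$, the sort costs $O((r+r')\log k)$ per call, and summing $\sum_c r_c \le m$, $\sum_c r'_c \le m$, and $\sum_c 1 = O(m)$ (the recursion tree has $O(m)$ nodes) gives the $O(m\log k)$ contribution directly. You should replace the telescoping sketch with this weak-connectivity bound, and also explicitly track the restricted-edge quantity $r'$, which your write-up never introduces even though your ``transition to unrestricted'' sub-case implicitly relies on it.
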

\begin{proof}
Let $r = \lvert E \rvert - \sum_{i=0}^{l} \lvert E_i \rvert$, $r' = \lvert E^{(r)} \rvert - \sum_{i=0}^{l} \lvert E_i^{(r)} \rvert$. First we show that the running time in each recursive call is
$O(m + (r + r') \log k)$.

In each recursive call of our algorithm, the time cost for sorting at Line \ref{algo:line:sort} is $O(l \log l)$. For the sample edge $e_i$ with rank $i$, either $V_i$ is not empty, or this edge is cross-level, below-level, or above-level.  Let $l_1$ be the number of edges in the former case, and $l_2$ be the number of edges in the latter case. For the former case, note that we only run the split algorithm for weakly-connected graphs, so there are at least $l_1 - 1$ cross-level edges, which implies $l_1 \le r + 1$. For the latter case, $e_i$ becomes unrestricted or does not appear for every $G_i$, so $l_2 \le r'$. Thus $l = l_1 + l_2 \le r + r' + 1$ and the time cost for sorting is $O((r + r') \log k)$.

By Lemma \ref{lam:spl}, the split algorithm runs in $O(m  + r \log k)$ time in each recursive call. All other parts of our algorithm run in linear time. Thus the running time for each recursion is $O(m + (r + r') \log k)$.
Note that the recursion depth is $O(\log n / \log k)$ with probability $1 - n^{\omega(1)}$. We can complete the proof by adding the time cost for all the $O(m)$ recursive calls together.
\end{proof}

Finally, we can get our main result by setting $k = 2^{\Theta\left(\sqrt{\log n}\right)}$.
\begin{theorem}
SSBP can be solved in $O(m \sqrt{\log n})$ time with high probability.
\end{theorem}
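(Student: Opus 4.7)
The plan is to derive the final theorem as an immediate corollary of Theorem~\ref{thm:overall-k} together with the linear-time reduction from SSBP to \ModSSBPshort described in Section~\ref{sec:def-bp}. First I would apply the reduction: given an SSBP instance $(G, w, s)$, construct the \ModSSBPshort instance $(G, w, h)$ with $h(s) = \max_{e \in E} w(e)$ and $h(v) = -\infty$ for every $v \neq s$. Computing $\max_{e \in E} w(e)$ and assigning the function $h$ is done in $O(m)$ time, and the equivalence between the two problems has already been established earlier in the paper, so this step contributes only $O(m)$ to the running time.

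Next I would balance the two terms in the running time given by Theorem~\ref{thm:overall-k}. That theorem guarantees that for $k = 2^{\Omega(\sqrt{\log n})}$, the main algorithm runs in $O(m \log n / \log k + m \log k)$ time with probability $1 - n^{-\omega(1)}$. Viewing this as a function of $\log k$, the two summands $m \log n / \log k$ and $m \log k$ are equalized exactly when $\log k = \sqrt{\log n}$. I therefore choose $k = 2^{\lceil \sqrt{\log n} \rceil}$, which satisfies the $k = 2^{\Omega(\sqrt{\log n})}$ and also $k = 2^{\Theta(\sqrt{\log n})}$ hypotheses required by the analysis. Substituting this choice gives
\[
  O\!\left(\frac{m \log n}{\sqrt{\log n}} + m \sqrt{\log n}\right) = O\!\left(m \sqrt{\log n}\right),
\]
which is the desired bound.

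Finally, I would wrap up by noting that Theorem~\ref{thm:overall-k} provides this bound with probability $1 - n^{-\omega(1)}$, which is ``with high probability'' in the standard sense; adding the $O(m)$ cost of the reduction does not change the asymptotic estimate. There is no genuine obstacle here, the argument is purely a balancing calculation on the parameter $k$; the real content lies in Theorem~\ref{thm:overall-k} and the lemmas bounding the recursion depth and per-recursion cost, which have already been proved. The only thing worth being careful about is ensuring that the chosen $k$ is an integer and that it respects both the lower-bound hypothesis $k = 2^{\Omega(\sqrt{\log n})}$ needed for the depth bound and the implicit upper bound coming from $k = 2^{\Theta(\sqrt{\log n})}$, both of which are automatic for $k = 2^{\lceil \sqrt{\log n} \rceil}$.
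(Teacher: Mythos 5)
Your proposal matches the paper's argument: the paper likewise obtains the theorem directly from Theorem~\ref{thm:overall-k} by choosing $k = 2^{\Theta(\sqrt{\log n})}$ so that the two terms $m\log n/\log k$ and $m\log k$ balance to $O(m\sqrt{\log n})$, with the linear-time SSBP-to-\ModSSBPshort reduction absorbed into the bound. Your write-up just spells out the balancing calculation and the high-probability guarantee a bit more explicitly than the paper does.
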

\begin{remark}
The above time bound is also true for expected running time. It can be easily derived from the fact that the worst-case running time is at most $n^{O(1)}$.
\end{remark}

In the rest of this section, we introduce the algorithm for \ModSSBPshort with at most one restricted edge and the split algorithm.

\subsection{Algorithm for the Graph with at most One Restricted Edge} \label{sec:restr1}

We introduce our algorithm for the graph with at most one restricted edge in the following two lemmas.

\begin{lemma} \label{thm:zero-res}
For a given \ModSSBPshort instance $(G, w, h)$, if there is no restricted edge in $G$, then the values of $d(v)$ for all $v \in V$ can be computed in linear time.
\end{lemma}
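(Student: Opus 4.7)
The plan is to observe that when no edge is restricted, every edge has weight $+\infty$, so for any path that starts at $u$ and ends at $v$ its capacity equals $\min(\{h(u)\} \cup \{+\infty, \ldots, +\infty\}) = h(u)$. Hence
\[
d(v) = \max\bigl\{ h(u) : \text{there exists a directed path from } u \text{ to } v \text{ in } G \bigr\},
\]
where we count the trivial length-zero path so that the set always contains $h(v)$ itself. The whole task therefore reduces to computing, for every vertex $v$, the maximum of $h$ over all ancestors of $v$ in the reachability order.

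The next step is to collapse the graph to a DAG. I would run a linear-time strongly-connected-components algorithm (Tarjan's or Kosaraju's) on $G$. All vertices inside one SCC reach each other, so they share the same value of $d$; let $H(S) = \max_{v \in S} h(v)$ be the per-SCC initial capacity. Building the condensation DAG $\widehat{G}$ (one vertex per SCC, one edge per inter-SCC edge of $G$, with duplicates collapsed or simply left in place since we only take maxima) is also linear time.

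Then I would compute $D(S)$ for every SCC $S$ by a single linear-time pass over $\widehat{G}$ in topological order: initialize $D(S) \gets H(S)$, and for every edge $S \to S'$ of $\widehat{G}$ (processed in topological order) update $D(S') \gets \max(D(S'), D(S))$. Because we visit each SCC only after all of its predecessors have been finalized, at the end $D(S) = \max\{h(u) : u \text{ reaches any vertex of } S\}$. Finally set $d(v) \gets D(S)$ for every $v$ in every SCC $S$; the total work is $O(n + m)$.

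The argument is essentially an unwinding of reachability, so there is no serious obstacle. The only small point worth double-checking is that we never need to strip duplicate edges from $\widehat{G}$ or sort anything numerically, since we only propagate componentwise maxima along edges; and that $h(v) = \pm \infty$ is harmless because the recurrence uses only $\max$. Correctness then follows by induction on topological depth in $\widehat{G}$, matching the closed-form expression for $d(v)$ displayed above.
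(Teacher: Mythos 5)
Your proposal is correct and follows essentially the same route as the paper: both observe that with only unrestricted edges, $d(v)$ is the maximum of $h(u)$ over all $u$ reaching $v$, both contract strongly-connected components with Tarjan's algorithm, and both then propagate maxima over the condensation DAG in linear time (the paper phrases the final step as ``Dijkstra approach on DAG,'' which on a DAG is exactly your topological-order sweep). Your write-up is a bit more explicit about the topological-order recurrence, but there is no substantive difference.
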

\begin{proof}
$G$ contains only unrestricted edges, so for every $v \in V$, $d(v)$ is just equal to the maximum value of $h(u)$ among all the nodes $u$ that can reach $v$. If $v_1$ and $v_2$ are in the same strongly-connected component, then $d(v_1) = d(v_2)$. Thus we can use Tarjan's algorithm~\cite{tarjan1972depth} to contract every strongly-connected component to a node. The initial capacity of a node is the maximum $h(u)$ for all $u$ in the component, and the capacity of an edge between nodes is the maximum among edges connecting components. Then Dijkstra approach on DAG takes linear time.
\end{proof}

\begin{lemma}
For a given \ModSSBPshort instance $(G, w, h)$, if there is exactly one restricted edge in $G$, then the values of $d(v)$ for all $v \in V$ can be computed in linear time.
\end{lemma}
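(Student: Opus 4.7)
My plan is to reduce this case to two calls of Lemma~\ref{thm:zero-res}. Let $e^{*} = (a, b)$ be the unique restricted edge with weight $w^{*}$. A bottleneck path ending at $v$ may be assumed to traverse $e^{*}$ at most once, since any further traversal just reimposes the cap $w^{*}$ and the loop in between may be deleted without hurting capacity. Hence
\[
d(v) \;=\; \max\bigl(d^{\setminus}(v),\ d^{*}(v)\bigr),
\]
where $d^{\setminus}(v)$ is the best capacity among paths in $G \setminus \{e^{*}\}$ ending at $v$, and $d^{*}(v)$ is the best capacity among paths ending at $v$ that use $e^{*}$ exactly once.

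First I would obtain $d^{\setminus}(v)$ for every $v$ in linear time by invoking Lemma~\ref{thm:zero-res} on the restricted-edge-free instance $(G \setminus \{e^{*}\}, w, h)$. In particular this hands me $d^{\setminus}(a)$. Any path witnessing $d^{*}(v)$ decomposes as (prefix $u \to a$)$\,\cdot\, e^{*}\,\cdot\,$(suffix $b \to v$), and both subpaths may be taken inside $G \setminus \{e^{*}\}$: the prefix can be trimmed at the first visit to $a$, and the suffix gains nothing from another use of $e^{*}$. The prefix contributes $h(u)$, the edge $e^{*}$ contributes $w^{*}$, and the suffix contributes only $+\infty$-weight edges, so $d^{*}(v) = \min(w^{*}, d^{\setminus}(a))$ whenever $b$ reaches $v$ in $G \setminus \{e^{*}\}$ and $-\infty$ otherwise. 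Write $\alpha := \min(w^{*}, d^{\setminus}(a))$.

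Rather than explicitly computing the set of nodes reachable from $b$, I would propagate $\alpha$ via a single bump to the initial capacity: set $h'(b) := \max(h(b), \alpha)$ and $h'(v) := h(v)$ for $v \neq b$, and then call Lemma~\ref{thm:zero-res} a second time on $(G \setminus \{e^{*}\}, w, h')$ to obtain values $d_{1}(v)$. If $b$ does not reach $v$ in $G \setminus \{e^{*}\}$ the bump is invisible, so $d_{1}(v) = d^{\setminus}(v) = d(v)$; otherwise $d_{1}(v) = \max(d^{\setminus}(v), h'(b)) = \max(d^{\setminus}(v), \alpha) = d(v)$. Two linear-time invocations of Lemma~\ref{thm:zero-res} plus $O(m)$ bookkeeping (locating $e^{*}$ and constructing $h'$) give the claimed $O(m)$ bound.

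The only point requiring care is the WLOG reduction in the first step, specifically the claim that optimal $d^{*}$-witnesses may be taken with their prefix to $a$ avoiding $e^{*}$. This holds because the only outgoing copy of $e^{*}$ starts at $a$ itself, so any prefix that used $e^{*}$ would still have to reach $a$ again, and truncating at the first visit to $a$ yields an $e^{*}$-free prefix of no smaller capacity. With that in hand every other step is a direct case check.
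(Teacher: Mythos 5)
Your proof is correct and follows essentially the same two-step decomposition as the paper: first solve the restricted-edge-free instance $G \setminus \{e^*\}$ via Lemma~\ref{thm:zero-res}, then propagate $\min(w^*, d^{\setminus}(a))$ from the head $b$ of $e^*$. The only cosmetic difference is that the paper performs the second step by a direct reachability search from $v_0$ with an explicit $\max$-update, whereas you reinvoke Lemma~\ref{thm:zero-res} after bumping $h(b)$ to $\max(h(b),\alpha)$; these are equivalent and both linear time, so the two write-ups are the same argument.
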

\begin{proof}
Let $e_0 = (u_0, v_0)$ be the only restricted edge in $G$.
There are two kinds of paths in $G$:
\begin{enumerate}
    \item Paths that do not traverse $e$. We remove $e$ from $G$ and use the algorithm in Lemma \ref{thm:zero-res} to get $d(v)$ for every node $v \in V$.
    \item Paths that traverse $e$. Note that $d(u_0)$ got in the previous step is the maximum capacity to $u_0$ through only unrestricted edges. Then we update $d(v)$ by $\max\{d(v), \min\{d(u_0), w(u_0, v_0)\}\}$ for every node $v$ that can be reached from $v_0$.
\end{enumerate}
We output the values of $d(\cdot)$ after these two kinds of updates, then all the paths should have been taken into account.
\end{proof}

\subsection{Split} \label{sec:spl}

Now we introduce the split algorithm at Line \ref{algo:line:spl} in our main algorithm. As before, we use the notation $I(x)$ for the index of a value $x$ and $d'(v)$ is the label of $v$ in Dijkstra's algorithm.
The goal of this procedure is to split $V$ into $l + 1$ levels, $V = V_0 \uplus V_1 \uplus \cdots \uplus V_l$, where $V_i = \{v \in V \mid I(d(v)) = i\}$.
We need to show that this can be done in $O(m + r \log k)$ time, where $r = \lvert E \rvert - \sum_{i=0}^{l} \lvert E_i \rvert$ is the total number of edges in $E$ that do not appear in any $(G_i, w_i, h_i)$.

A straightforward approach to achieve this goal is to use Dijkstra's algorithm as described in Section \ref{sec:dijk}.
We map all the edge weights and \initcapam to their indices using binary searches, and run Dijkstra's algorithm for \ModSSBPshort. The output should be exactly $d'(v) = I(d(v))$ for every $v$. However, this approach is rather inefficient. Evaluating the index $I(x)$ for a given $x$ requires $\Omega(\log l)$ time in a comparison-based model, thus in total, this algorithm needs $\Omega(n \log l)$ time to compute indices, and this does not meet our requirement for the running time.

The major bottleneck of the above algorithm is the inefficient index evaluations. Our algorithm overcomes this bottleneck by reducing the number of index evaluations for both edge weights and \initcapam to be at most $O(r + n / \log l)$.

\subsubsection{Index Evaluation for Edge Weights} \label{sec:indedge}
First we introduce our idea to reduce the number of index evaluations for edge weights. Recall that in Dijkstra's algorithm for \ModSSBPshort we maintain a label $d'(v)$ for every $v \in V$. In every Dijkstra step, we extract an \statuslabeled node $u$ with the maximum label, and for all edges $(u, v) \in E$ we compute $\min\{d'(u), I(w(u, v))\}$ to update $d'(v)$. In the straightforward approach we evaluate every $I(w(u, v))$ using binary search, but actually this is a big waste:
\begin{enumerate}
    \item If $w(u, v) \ge \lambda_{d'(u)}$, then $\min\{d'(u), I(w(u, v))\} = d'(u)$, so there is no need to evaluate $I(w(u, v))$.
    \item If $w(u, v) < \lambda_{d'(u)}$, then $\min\{d'(u), I(w(u, v))\} = I(w(u, v))$, so we do need to evaluate $I(w(u, v))$. However, it can be shown that $(u, v)$ is either a cross-level edge or a below-level edge, so $(u, v)$ will not appear in any subsequent recursive calls of $(G_i, w_i, h_i)$.
\end{enumerate}
Using the method discussed above, we can reduce the number of index evaluations for edge weights to be at most $r = \lvert E \rvert - \sum_{i=0}^{l} \lvert E_i \rvert$ in Dijkstra's algorithm. Lemma \ref{lam:qe} gives a formal proof for this.

\subsubsection{Index Evaluation for \initcapamBig}

Now we introduce our idea to reduce the number of index evaluations for \initcapam. Recall that in Dijkstra's algorithm, we need to initialize the label $d'(v)$ to be $I(h(v))$ for each $v \in V$, and maintain a priority queue for the labels of all \statuslabeled nodes. If we evaluate every $I(h(v))$ directly, we have to pay a time cost of $\Omega(n \log l)$.

In our split algorithm, we first find a spanning tree $T$ of $G$ after replacing all the edges with undirected edges. Then we partition the tree $T$ into $b = O(n/s)$ edge-disjoint (but not necessarily node-disjoint) subtrees, $T_1, \dots, T_b$, each of size $O(s)$. Theorem~\ref{thm:partition} shows that this partition can be found in $O(n)$ time, and the proof is given in Appendix~\ref{app:partition}.

\begin{theorem} \label{thm:partition}
Given a tree $T = (V, E)$ with $n$ nodes and given an integer $1 \le s \le n$, there exists a linear time algorithm that can partition $T$ into edge-disjoint subtrees, $T_1, \dots, T_b$, such that the number of nodes in each subtree is in the range $[s, 3s)$.
\end{theorem}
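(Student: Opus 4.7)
The plan is to root $T$ at an arbitrary vertex and perform a single post-order DFS that greedily carves off subtrees of size in $[s, 2s)$ as soon as enough pending material has accumulated, then absorbs any leftover material at the root into the most recent piece.

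For each node $v$ I would maintain a pending bag of edges together with a counter $\sigma(v)$ equal to the number of distinct nodes covered by those edges together with $v$ itself. At a leaf the bag is empty and $\sigma(v) = 1$. Processing an internal node $v$ iterates over its children $c_1, \dots, c_d$: after DFS on $c_i$, the edge $(v, c_i)$ and all contents of $c_i$'s bag are appended to $v$'s bag, and $\sigma(v)$ is incremented by $\sigma(c_i)$; if now $\sigma(v) \ge s$, I emit $v$'s current bag as a completed subtree $T_j$ and reset $v$'s bag to empty with $\sigma(v) = 1$ (the node $v$ is retained as an attachment point, shared with $T_j$ but sharing no edges). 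The bag can be stored as a singly linked list so that appending one bag to another and emitting a bag as a subtree both cost $O(1)$; the total work is therefore $O(n)$.

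The key invariants to verify are (i) at the end of DFS$(v)$, $\sigma(v) < s$, and (ii) every emitted subtree has size in $[s, 2s)$. Both follow by induction on the postorder: (i) holds because we cut whenever $\sigma(v)$ reaches $s$; (ii) holds because right before an emission the previous value of $\sigma(v)$ was strictly less than $s$ and we just added $\sigma(c_i) < s$ from the freshly processed child, so the emitted size lies in $[s, 2s)$. Connectedness of each emitted subtree is immediate since the pending bag at $v$ always represents the connected subtree rooted at $v$ consisting of the already-absorbed portions of its descendants, and edge-disjointness is clear since every edge enters a bag exactly once (when its lower endpoint returns to its parent) and is placed into exactly one emitted $T_j$.

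The main obstacle, and really the only subtle point, is what to do when the DFS returns at the root $r$: the leftover bag has $\sigma(r) < s$ and, if it still contains at least one edge, would form a subtree too small to satisfy the lower bound. The fix is to merge this leftover with the most recently emitted subtree, whose size was $< 2s$; the merged subtree then has size $< 3s$ and still contains at least $s$ nodes, so it lies in $[s, 3s)$ as required. The only corner case is that the DFS might emit no subtree at all, but this would require $\sigma(r) < s$ throughout, contradicting $n \ge s$; once the hypothesis $s \le n$ is used, at least one emission occurs and the final merge is well-defined. This completes the partition in the desired size range and in linear total time.
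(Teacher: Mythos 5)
Your proposal is correct and is essentially the same algorithm as the paper's, which is a recursive post-order traversal accumulating material at each node, emitting a subtree of size $[s,2s)$ whenever the accumulated count reaches $s$ (keeping the current node as a shared attachment point), and merging the root's leftover, of size $<s$, into the last emitted piece to land in $[s,3s)$. The bookkeeping differs cosmetically (you track a bag of edges plus a counter $\sigma$, the paper tracks a node set $U$), but the invariants, size bounds, connectedness argument via the shared attachment node, and the root-leftover merge are all the same.
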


We form $b$ groups of nodes, $U_1, \dots, U_b$, where $U_i$ is the group of nodes that are in the $i$-th subtree $T_i$.
In the running of Dijkstra's algorithm, we divide the \statuslabeled nodes in $V$ into two kinds:
\begin{enumerate}
    \item \textbf{Updated node}. This is the kind of node $v$ that has $d'(v)$ already been updated by Dijkstra's update rule \eqref{eq:dijkupd}, which means $d'(v) = \min\{d'(u), I(w(u, v))\} \geq I(h(v))$ for some $u$ after a previous update. The value of $\min\{d'(u), I(w(u, v))\}$ is evaluated according to Section \ref{sec:indedge}, so the value of $d'(v)$ can be easily known. We can store such nodes in buckets to maintain the order of their labels.
    \item \textbf{Initializing node}. This is the kind of node $v$ whose $d'(v)$ has not been updated by Dijkstra's update rule \eqref{eq:dijkupd}, so $d'(v) = I(h(v))$. However, we do not store the value of $I(h(v))$ explicitly.
    For each group $U_i$, we maintain the set of initializing nodes in $U_i$. We only compute the value of $I(h(v))$ when $v$ is the maximum in its group, and use buckets to maintain the maximum values from all groups. If each group has size $s=O(\log k)$, the maximum can be found in $O(\log k)$ time by brute-force search.
\end{enumerate}

At each iteration in Dijkstra's algorithm, we extract the active node with the maximum label among the updated nodes and initializing nodes and mark it as scanned. For the case that the maximum node $v \in U_j$ is an initializing node, we remove $v$ from $U_j$, and compute $d'(u) = I(h(u))$ for the new maximum node $u$.
However, if we compute this value directly using an index evaluation for $h(u)$, then we will suffer a total cost $\Omega(n \log l)$, which is rather inefficient.

The idea to speed up is to check whether $d'(u) = d'(v)$ before performing an index evaluation. This can be done in $O(1)$ time since we can know the corresponding interval of $h(u)$ from the value of $d'(v)$ if $I(h(u)) = d'(v)$. We only actually evaluate $I(h(u))$ if $d'(u) \neq d'(v)$ after the Dijkstra step scans all nodes with level $d'(v)$. In this way, we can always ensure that the number of index evaluations in a group never exceeds the number of different final label values $d'(\cdot)$ in this group. Indeed, it can be shown that if there are $c$ different final label values $d'(\cdot)$ in a group $U_i$, then there must be at least $c - 1$ cross-level edges in $T_i$, which implies that the number of index evaluations for \initcapam should be no greater than $r + b$. (Remember $b$ is the number of groups in the partition.) Lemma \ref{lam:diffdp} gives a formal proof for this.

\subsubsection{The $O(m + r \log k )$-time Split Algorithm}

\begin{figure}[htb]
\begin{algorithmic}[1]
\Require Directed graph $G=(V,E)$, weight function $w(\cdot)$, \initcapa function $h(\cdot)$ , $l + 2$ thresholds $-\infty = \lambda_0 < \lambda_1 < \cdots < \lambda_{l} < \lambda_{l + 1} = +\infty$
\Ensure Split $V$ into $V_0 \uplus V_1 \uplus \cdots \uplus V_l$ where $V_i = \{v \in V \mid \lambda_i \le d(v) < \lambda_{i+1} \}$.
\State Let $s = \min\{\lceil \log l \rceil, n\}$. Find a spanning tree $T$ of $G$ after replacing all the edges with undirected edges and partition $T$ into $b = O(n / s)$ edge-disjoint subtrees $T_1, \dots, T_b$, each of size $O(s)$.
\State Form $b$ groups of nodes, $U_1, \dots, U_b$, where $U_i$ is the group of nodes that are in the $i$-th subtree~$T_i$.
\State For each group $U_i$, find the maximum node $v$ and evaluate $I(h(v))$. Initialize $l + 1$ buckets $B_0, \dots, B_l$ which store the groups according to the index of $h(\cdot)$ of their maximum nodes.
\State Initialize $l+1$ buckets $C_0, \dots, C_l \gets \varnothing$ which store the \statuslabeled nodes according to $d'(\cdot)$.
\For{$i \gets l, l-1, \dots, 1, 0$}
    \ForAll{$U\in B_i$}
        \ForAll{$u \in U$ with $h(u) \ge \lambda_i$}
            \State Delete $u$ from $U$ and put $u$ into $C_i$
            \State $d'(u)\gets i$
        \EndFor
    \EndFor
    \While{$C_i \neq \varnothing$}
        \State Extract a node $u$ from $C_i$
        \ForAll{$(u, v) \in E$}
            \State $\bar{w} \gets \min\{d'(u), I(w(u, v))\}$. (Evaluate $I(w(u, v))$ only if $w(u, v) < \lambda_{d'(u)}$)
            \If{$\lambda_{\bar{w}} > h(v)$}
                \If{$d'(v)$ does not exist \textbf{or} $\bar{w}>d'(v)$}
                    \State Delete $v$ from $C_{d'(v)}$
                    \State Delete $v$ from the group $U$ containing it (if any) and put $v$ into $C_{\bar{w}}$
                    \State $d'(v)\gets \bar{w}$
                \EndIf
            \EndIf
        \EndFor
    \EndWhile
    \While{$B_i$ contains at least one non-empty group}
        \State Extract an non-empty group $U$ from $B_i$
        \State Find the maximum node $u$ in $U$, evaluate $I(h(u))$, and put $U$ into $B_{I(h(u))}$
    \EndWhile
\EndFor
\State Split $V$ according to $d'$ and return
\end{algorithmic}
\caption{Split Algorithm} \label{fig:algosplit}
\end{figure}

Now we are ready to introduce our split algorithm in full details. A pseudocode is shown in Figure~\ref{fig:algosplit}. During the search at Line 5 - 21, $B_i$ may contain groups with maximum nodes not at the $i$-th level, e.g., when the maximum node in a group $U$ is deleted at Line 17 and $I(h(u))$ of the new maximum node $u$ in $U$ has not been evaluated yet. We have the following observations:
\begin{itemize}
    \item For all $v \in V$, $d'(v)$ is non-decreasing, and at the end we have $d'(v) = I(d(v))$.
    \item Only at Line 3, 13 and 21 we need to evaluate the index of $h(\cdot)$ of a node or the index of an edge. Each index evaluation costs $O(\log l)$ time.
    \item The numbers of executions of the while loops at Line 10 - 18, Line 19 - 21 are bounded by $O(n)$.
    \item The number of times entering the for loop at Line 7 - 9 can be bounded by the number of index evaluations at Line 3 and 21. Each loop costs $O(\log l)$ time.
\end{itemize}

Our algorithm is an implementation of Dijkstra's algorithm, so the correctness is obvious. The running time analysis is based on the following lemmas:
\begin{lemma} \label{lam:qe}
If we evaluate $I(w(u,v))$ at Line 13, then the edge $(u,v)$ will not be in any recursive calls of $(G_i, w_i, h_i)$.
\end{lemma}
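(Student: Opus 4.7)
The plan is to show that whenever Line~13 of Figure~\ref{fig:algosplit} triggers an evaluation of $I(w(u,v))$ --- i.e., the guard $w(u,v) < \lambda_{d'(u)}$ fires --- the edge $(u,v)$ is either cross-level or below-level. By the remark following Lemma~\ref{lam:red}, neither kind of edge lies in $E_j$ for any $j$, so such an edge does not survive into any recursive subproblem $(G_j, w_j, h_j)$, which is exactly what the lemma asserts.

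First I would establish the invariant that when $u$ is extracted from $C_i$ at Line~12 during the outer iteration indexed by $i$, one has $d'(u) = i = I(d(u))$. The equality $d'(u) = i$ is immediate by inspection of the code: nodes are only inserted into $C_i$ (at Lines~8 or~17) simultaneously with setting $d'(\cdot) = i$, and from that moment on no relaxation $\bar w = \min\{d'(u'), I(w(u',u))\}$ can exceed $d'(u') \le i$ for any not-yet-scanned predecessor $u'$, since the outer loop processes buckets in strictly decreasing order. The equality $i = I(d(u))$ is then the standard Dijkstra max-capacity correctness invariant, transported through the index map $I(\cdot)$.

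Next I would translate the guard $w(u,v) < \lambda_{d'(u)}$, using monotonicity of $I$, into the inequality $w(u,v) < \lambda_{I(d(u))}$, i.e.\ $I(w(u,v)) < I(d(u))$. I would then case-split on the level of $v$. If $I(d(v)) \neq I(d(u))$, the endpoints of $(u,v)$ lie in distinct levels $V_{I(d(u))}$ and $V_{I(d(v))}$, so $(u,v)$ is cross-level. If instead $I(d(v)) = I(d(u)) =: i$, then both endpoints lie in $V_i$ while $w(u,v) < \lambda_i$, which is precisely the definition of a below-level edge in $V_i$. Either way, $(u,v)$ is absent from every $E_j$, finishing the proof.

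The only subtle step is the invariant $d'(u) = I(d(u))$ at the moment of extraction; once that is in hand, the conclusion reduces to unpacking the definitions of ``cross-level'', ``below-level'', and $E_j$ from the reduction in Lemma~\ref{lam:red}.
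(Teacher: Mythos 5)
Your proof is correct and takes essentially the same approach as the paper: both observe that since $u$ is already scanned, $d'(u)=I(d(u))$, translate the guard into $I(w(u,v))<I(d(u))$, and conclude that $(u,v)$ is cross-level or below-level. The only cosmetic difference is the final case split --- you split on whether $I(d(v))=I(d(u))$, while the paper splits on whether $I(d(v))=I(w(u,v))$ --- but both enumerations are exhaustive and yield the same conclusion.
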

\begin{proof}
We evaluate $I(w(u,v))$ only if $w(u, v) < \lambda_{d'(u)}$, so $\bar{w} = I(w(u, v))$ right after Line 13. Since $u$ has already been scanned, $d'(u)$ here is just its final value $I(d(u))$. Note that $I(d(v)) \ge \min\{I(d(u)), I(w(u, v))\} = I(w(u, v))$. 
If finally $I(d(v)) > I(w(u,v))$, then $I(w(u,v))$ is smaller than both $I(d(u))$ and $I(d(v))$, thus $(u,v)$ is a below-level edge or cross-level edge. If $I(d(v))=I(w(u,v))$, then $I(d(v)) < I(d(u))$ and $(u, v)$ is a cross-level edge.
\end{proof}

\begin{lemma} \label{lam:diffdp}
At Line 3 and 21, if we evaluate $I(h(u))$ for $c$ nodes $u$ in some group $U_i$, then the number of different final label values $d'(\cdot)$ in $U_i$ is at least $c$. Thus, the number of edges in the subtree $T_i$ corresponding to $U_i$ that do not appear in any recursive calls of $(G_i, w_i, h_i)$ is at least $c - 1$.
\end{lemma}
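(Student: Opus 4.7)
The plan is to analyze the sequence of nodes in the group $U_i$ whose index $I(h(\cdot))$ gets evaluated, and use a Dijkstra-style invariant to trap the final label $d'(\cdot)$ of each such node in a narrow interval. Let $u_1, \dots, u_c$ denote those nodes in chronological order of evaluation (at Line 3 or Line 21), and set $j_k := I(h(u_k))$. The first step is to argue $j_1 > j_2 > \cdots > j_c$: the evaluation of $u_k$ (for $k \ge 2$) takes place at Line 21 during outer-loop iteration $i = j_{k-1}$, and only after Line 7-9 has stripped every $u \in U_i$ with $h(u) \ge \lambda_{j_{k-1}}$, so $h(u_k) < \lambda_{j_{k-1}}$ and hence $j_k < j_{k-1}$.

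The heart of the proof is then to show $j_k \le d'(u_k) < j_{k-1}$, with the convention $j_0 = l + 1$. The lower bound is immediate: either Line 7-9 fires on $u_k$ at iteration $j_k$ and sets $d'(u_k) = j_k$, or some earlier Dijkstra update at Lines 13-18 hits $u_k$ first, in which case the guard $\lambda_{\bar{w}} > h(u_k)$ at Line 14 ensures the new value $\bar{w}$ exceeds $j_k$. The upper bound invokes Dijkstra correctness: because $u_k$ is still sitting in $U_i$ when Line 21 examines it during iteration $j_{k-1}$, it has never been placed into any $C$-bucket (Line 7-9 and Line 15-18 are the only insertion points) and therefore was not scanned during iterations $l, l-1, \dots, j_{k-1}$; since every node with final label at least $j_{k-1}$ must have been scanned within that range, this forces $d'(u_k) < j_{k-1}$. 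The intervals $[j_k, j_{k-1})$ are thus pairwise disjoint, so $d'(u_1), \dots, d'(u_c)$ are pairwise distinct, which gives the first conclusion.

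For the second conclusion, I would use the connectedness of the subtree $T_i$. Since its node set $U_i$ carries at least $c$ distinct values of $d'$, contracting each level-class inside $T_i$ to a single super-node produces a connected multigraph on at least $c$ super-nodes, which must contain at least $c - 1$ edges; each such edge corresponds to an edge of $T_i$ whose endpoints lie in distinct levels $V_j$, i.e., a cross-level edge, which by the remark following Lemma~\ref{lam:red} does not appear in any recursive call of $(G_i, w_i, h_i)$. The main obstacle I anticipate is the upper bound $d'(u_k) < j_{k-1}$: tracing $u_k$ through the $C$-buckets step by step is error-prone, so the cleanest route is to recast the claim as the standard Dijkstra invariant (every node with final label at least $j_{k-1}$ is scanned by the end of iteration $j_{k-1}$), and then observe that continued membership in $U_i$ certifies $u_k$ has not yet been scanned.
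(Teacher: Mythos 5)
Your proof is correct and takes essentially the same approach as the paper's: you trap each evaluated node's final label $d'(u_k)$ in the disjoint interval $[I(h(u_k)),\,I(h(u_{k-1})))$ using the Dijkstra invariant that every node with final label $\ge j_{k-1}$ is already scanned by the time Line~21 runs at iteration $j_{k-1}$, and then use connectivity of $T_i$ to extract $c-1$ cross-level edges. The only differences are cosmetic: you argue the lower bound $d'(u_k)\ge j_k$ at the code level (Line~7--9 vs.\ the guard at Line~14), whereas the paper simply cites $d(u)\ge h(u)$ so $I(h(u))\le I(d(u))$; the structure of the argument is otherwise identical.
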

\begin{proof}
At line 21, $I(d(u))$ must be less than $i$; otherwise, $u$ should have been extracted at Line 11 before extracting $u$ at Line 20, which is impossible. Also note that $d(u) \ge h(u)$, so $I(h(u))\leq I(d(u))$ for all $u \in V$. Thus, if $u_1 \in U_i$ is evaluated at Line 3 and there are $c - 1$ nodes $u_2, \dots, u_c$ extracted from $U_i$ at Line 21, then $I(d(u_1)), I(d(u_2)), \dots, I(d(u_c))$ should be in $c$ distinct ranges: $[I(h(u_1)), +\infty), [I(h(u_2)), I(h(u_1))), \dots, [I(h(u_c)), I(h(u_{c-1})))$, which implies that the number of different final values of $d'(u)$ in $U_i$ is at least $c$.

Suppose we remove all the cross-level edges in $T_i$, i.e., remove all the edges $(u, v)$ in $T_i$ whose final values of $d'(u)$ and $d'(v)$ differ. Then the tree should be decomposed into at least $c$ components since there are at least $c$ different final values of $d'(u)$ in $U_i$. Thus there are at least $c - 1$ cross-level edges in $T_i$.
\end{proof}

Finally, we can derive the $O(m + r \log k)$ time bound for our split algorithm.

\begin{proof}[Proof for Lemma \ref{lam:spl}]
By Lemma \ref{lam:qe}, the number of index evaluations at Line 13 is at most $r$. Let $c_i$ be the number of index evaluations at Line 3 and 21 for nodes in the group $U_i$. Then by Lemma \ref{lam:diffdp}, $\sum_{i = 1}^{b} c_i \le r + b$. Thus the total number of index evaluations in our split algorithm can be bounded by $2r + b \le O(r + n / \log l)$, which costs $O(r \log l + n)$ time.

At Line 3, Line 7-9, Line 21, we need to do a brute-force search in a given group, and each search costs $O(s) \le O(\log l)$ time. 
Note that the number of the brute-force searches can be bounded by twice the number of index evaluations at Line 3 and 21, so the total time cost for brute-force search is $O(r \log l + n)$.

It can be easily seen that all other parts of our split algorithm runs in linear time, so the overall time complexity is $O(m + r \log l) \le O(m + r \log k)$.
\end{proof}

\section{Discussion}

We give an improved algorithm for solving SSBP in $O(m\sqrt{\log n})$ time which is faster than sorting for sparse graphs. This algorithm is a breakthrough compared to traditional Dijkstra's algorithm using Fibonacci heap. There are some open questions remained to be solved. Can our algorithm for SSBP be further improved to $O(m \beta(m, n))$, which is the time complexity for the currently best algorithm for $s$-$t$ BP? Can we use our idea to obtain an algorithm for SSBP that runs faster than Dijkstra in word RAM model?

\bibliographystyle{alpha}
\bibliography{main}

\appendix

\section{Tree Partition Algorithm} \label{app:partition}

Now we introduce the tree partition algorithm used in our \ModSSBPshort algorithm. There are many ways to do that. The tree partition algorithm we introduced here is a slight variant of the topological partition algorithm used in Frederickson's algorithm \cite{frederickson1985data}.

\paragraph*{Proof of Theorem~\ref{thm:partition}}
\begin{proof}
The core procedure in our tree partition algorithm is a recursive function \textproc{Partition} as shown in Figure \ref{fig:algopartition}. \textproc{Partition} takes a tree as the input, then it partitions the tree by calling itself for the subtrees. Using simple induction it can be shown that at any time from Line \ref{algopartition:line:Ustart} to \ref{algopartition:line:ret}, $U$ can always induce a connected subgraph in $T$, thus the induced subgraph is indeed a subtree.
Every time \textproc{Partition} reports a group $U_i$ at Line \ref{algopartition:line:report} during the running, we say that the subtree induced by $U_i$ in $T$ is chosen as a \textit{subtree candidate}.
The return value of \textproc{Partition} is a set of nodes $R$ which can induce a subtree in $T$ whose edges do not appear in any subtree candidate.

\begin{figure}[htbp]
\begin{algorithmic}[1]
\Function{Partition}{$T$}
    \State Let $v$ be the root of $T$
    \State $U \gets \{v\}$ \label{algopartition:line:Ustart}
    \ForAll {subtree $T'$ of $v$}
        \State $U \gets U \cup \Call{Partition}{T'}$
        \If {$\lvert U \rvert \ge s$}
            \State Report a new group $U$ \label{algopartition:line:report}
            \State $U \gets \{v\}$
        \EndIf
    \EndFor
    \State \Return $U$ \label{algopartition:line:ret}
\EndFunction
\end{algorithmic}
\caption{Tree Partition Algorithm} \label{fig:algopartition}
\end{figure}

To produce a tree partition for the spanning tree $T$ of $G$, we pass $T$ as the input to \Call{Partition}{$T$}. We collect the groups $U_1, \dots, U_b$ reported by \Call{Partition}{$T$} in order, and then merge the last group $U_b$ with the set of nodes $R$ returned by \Call{Partition}{$T$} or regard $R$ as a new group if no group has been reported before. For the former case, let $v$ be the root node $v$ when $U_b$ is reported, then $v$ must be contained in both $U_b$ and $R$, thus $U_b$ after the merge can still induce a subtree in $T$.

The running time of this algorithm is obviously $O(n)$. 
Now we turn to analyze the correctness.  
Since the subtrees are clearly edge-disjoint and every edge is contained in exactly one subtree, we only need to show that the size of each group is in $[s, 3s)$.
It is easy to see that the set of nodes returned at Line \ref{algopartition:line:ret} has the size in the range $[1, s]$, so the size of each subtree candidate is in the range $[s, 2s)$. If in the end there is no subtree candidate, then $\lvert R \rvert = s = n$ nodes will be returned and regarded as the only group in the partition. If there exist at least one subtree candidates, then the size of $U_b$ should be in the range $[s, 3s)$. Thus we complete our proof.
\end{proof}

\end{document}